\def\eqref#1{equation~\ref{#1}}
\def\1{\bm{1}}
\def\vzero{{\bm{0}}}
\def\vpi{{\bm{\pi}}}
\def\va{{\bm{a}}}
\def\vc{{\bm{c}}}
\def\ve{{\bm{e}}}
\def\vm{{\bm{m}}}
\def\vr{{\bm{r}}}
\def\vs{{\bm{s}}}
\def\vw{{\bm{w}}}
\def\vx{{\bm{x}}}
\def\vy{{\bm{y}}}
\def\vz{{\bm{z}}}
\def\mG{{\bm{G}}}
\def\mI{{\bm{I}}}
\def\mP{{\bm{P}}}
\def\mPi{{\bm{\Pi}}}
\DeclareMathAlphabet{\mathsfit}{\encodingdefault}{\sfdefault}{m}{sl}
\SetMathAlphabet{\mathsfit}{bold}{\encodingdefault}{\sfdefault}{bx}{n}
\newcommand{\E}{\mathbb{E}}
\newcommand{\R}{\mathbb{R}}
\DeclareMathOperator{\sign}{sign}
\newcommand{\F}{\mathbb{F}}
\newcommand{\N}{\mathbb{N}}
\newcommand{\from}{\gets}
\DeclareMathOperator{\calA}{\mathcal{A}}
\DeclareMathOperator{\calD}{\mathcal{D}}
\DeclareMathOperator{\calE}{\mathcal{E}}
\DeclareMathOperator{\calO}{\mathcal{O}}
\DeclareMathOperator{\calN}{\mathcal{N}}
\DeclareMathOperator{\Ber}{\mathrm{Ber}}
\newcommand{\KeyGen}{\mathsf{KeyGen}}
\newcommand{\Encode}{\mathsf{Encode}}
\newcommand{\Decode}{\mathsf{Decode}}
\newcommand{\Sample}{\mathsf{Sample}}
\newcommand{\Detect}{\mathsf{Detect}}
\newcommand{\Generate}{\mathsf{Generate}}
\newcommand{\Recover}{\mathsf{Recover}}
\newcommand{\PRC}{\mathsf{PRC}}
\newcommand{\PRCWat}{\mathsf{PRCWat}}
\newcommand{\True}{\mathsf{True}}
\newcommand{\False}{\mathsf{False}}
\newcommand{\None}{\mathsf{None}}
\newtheorem{theorem}{Theorem}
\newtheorem*{rep@theorem}{\rep@title}
\newcommand{\newreptheorem}[2]{%
\newenvironment{rep#1}[1]{%
 \def\rep@title{#2 \ref{##1}}%
 \begin{rep@theorem}}%
 {\end{rep@theorem}}}
\newtheorem{remark*}{Remark}
\newtheorem{fact}{Fact}
\newcommand{\otp}{\mathsf{otp}}
\newcommand{\fpr}{F}
\newcommand{\testbits}{\mathsf{testbits}}
\newcommand{\numtestbits}{\mathrm{num\_test\_bits}}
\newcommand{\noiserate}{\eta}
\newcommand{\maxbpiter}{\mathrm{max\_bp\_iter}}
\newcommand{\messagelength}{\mathrm{message\_length}}
\newcommand{\result}{\mathrm{result}}
\newcommand{\BPOSD}{\mathsf{BP\text{-}OSD}}
\newcommand{\red}[1]{\textcolor[rgb]{0.5, 0, 0}{#1}} 
\title{An undetectable watermark for generative image models}
\author{\textbf{Sam Gunn}\thanks{Equal contribution.} \\
UC Berkeley \\
\small \texttt{gunn@berkeley.edu}
\And
\textbf{Xuandong Zhao}$^*$ \\
UC Berkeley \\
\small \texttt{xuandongzhao@berkeley.edu}
\And
\textbf{Dawn Song} \\
UC Berkeley \\
\small \texttt{dawnsong@berkeley.edu}
}
\begin{document}

\maketitle

\begin{abstract}
We present the first undetectable watermarking scheme for generative image models.
\emph{Undetectability} ensures that no efficient adversary can distinguish between watermarked and un-watermarked images, even after making many adaptive queries.
In particular, an undetectable watermark does not degrade image quality under any efficiently computable metric.
Our scheme works by selecting the initial latents of a diffusion model using a pseudorandom error-correcting code~(Christ and Gunn, 2024), a strategy which guarantees undetectability and robustness. We experimentally demonstrate that our watermarks are quality-preserving and robust using Stable Diffusion 2.1.
Our experiments verify that, in contrast to \emph{every prior scheme} we tested, our watermark does not degrade image quality.
Our experiments also demonstrate robustness: existing watermark removal attacks fail to remove our watermark from images without significantly degrading the quality of the images.
Finally, we find that we can robustly encode 512 bits in our watermark, and up to 2500 bits when the images are not subjected to watermark removal attacks.
Our code is available at \url{https://github.com/XuandongZhao/PRC-Watermark}.
\end{abstract}

\section{Introduction} \label{section:intro}
As AI-generated content grows increasingly realistic, so does the threat of AI-generated disinformation.
AI-generated images have already appeared online in attempts to influence politics \citep{Mos23}.
Watermarking has the potential to mitigate this issue: If AI providers embed watermarks in generated content, then that content can be flagged using the watermarking key.
Recognizing this, governments have begun putting pressure on companies to implement watermarks \citep{Bid23,CSL24,EU24}.
However, despite an abundance of available watermarking schemes in the literature, adoption has remained limited \citep{SB23}.
There are at least a few potential explanations for this.

First, some clients are willing to pay a premium for un-watermarked content.
For instance, a student using generative AI for class might find a watermark problematic.
Any company implementing a watermark could therefore put itself at a competitive disadvantage.

Second, existing watermarking schemes noticeably degrade the quality of generated content.
Some schemes guarantee that the distribution of a \emph{single} response is unchanged, but introduce correlations across generations.\footnote{One method with this guarantee is to fix the randomness of the sampling algorithm, so that the model can only generate one unique response for each prompt.}
While this might be acceptable for small models, it is questionable whether anyone would be willing to use such a watermark for a model that cost over \$100 million just to train \citep{Wired2023}.
In other words, given the vast effort put into optimizing models, any observable change in the model's behavior is probably unacceptable.

\textbf{Undetectable watermarks. }
\emph{Undetectability}, originally defined in the context of watermarking by \cite{CGZ24}, addresses both of these issues.
For an undetectable watermark, it is computationally infeasible for anyone who doesn't hold the detection key to distinguish generations with the watermark from generations without --- even if one is allowed to make many adaptive queries.
Crucially, an undetectable watermark provably preserves quality under any efficiently-computable metric, including quality metrics that are measured across many generations (such as FID \citep{heusel2017gans}, CLIP \citep{radford2021learning} and Inception Score \citep{salimans2016improved} for images).
Therefore one can confidently use an undetectable watermark without any concern that the quality might degrade.
And if the detection key is kept sufficiently private, then the competitive disadvantage to using the watermark can be minimized:
One can give the detection key only to mass distributors of information (like Meta and X), so that broad dissemination of AI-generated disinformation can be filtered without interfering in users' personal affairs.
Since only the mass information distributors would be able to detect the watermark, it would not harm the value of the content except to bad actors.

\textbf{The PRC watermark. }
In this paper, we introduce the first undetectable\footnote{See \Cref{section:practical-undetectability} for a discussion of the extent to which our scheme is cryptographically undetectable for various choices of parameters.} watermarking scheme for image generation models.
Our scheme works for latent diffusion models \citep{rombach2022high}, with which we generate watermarked images by progressively de-noising initial watermarked noise within the latent space.
The key component in our scheme is a pseudorandom error-correcting code, or pseudorandom code (PRC), a cryptographic object introduced by \cite{CG24}.
We therefore refer to our scheme as the \textbf{PRC watermark} in this work.

At a high level, a PRC allows us to embed a \emph{cryptographically pseudorandom pattern} that is robustly distributed across the entire latent space, ensuring that the watermark operates at a semantic level.
The fact that our watermark is embedded at the semantic level, combined with the PRC’s error-correcting properties, makes our watermark highly robust --- especially to pixel-level watermark removal attacks \citep{zhao2023invisible}.

Additionally, since the PRC from \cite{CG24} can be used to encode and decode messages, we can \emph{robustly embed large messages} within the PRC watermark. While the decoder is somewhat less robust than the detector, the detector can still be effectively used in cases where the decoder fails.

Finally, the PRC watermark is highly flexible, requiring no additional model training or fine-tuning, and can be seamlessly incorporated into existing diffusion model APIs.
It allows the user to independently set the message length and a desired upper bound on the false positive rate (FPR) at the time of watermark key generation.
The false positive rate is rigorous, rather than empirical: If the user sets the desired upper bound on the false positive rate to $\fpr$, then we prove in \Cref{theorem:detector-fpr} that the false positive rate will not exceed $\fpr$.


\textbf{Results. }
Experiments on quality and detectability are presented in \Cref{subsection:experiments-quality}.
We emphasize that undetectability theoretically ensures quality preservation, and our scheme is undetectable by the results of \cite{CG24}.
Therefore we perform experiments on quality and detectability only to ensure that our scheme is secure enough with our finite choice of parameters.

We demonstrate the undetectability of our scheme in three key ways:
\begin{itemize}[leftmargin=*, itemsep=0pt, topsep=0pt]
    \item We show in \Cref{table:quality} that the quality, as measured by the FID, CLIP, and Inception Score, are all preserved by the PRC watermark. This is in contrast to \textbf{every other scheme we tested}.
    \item We show in \Cref{table:diversity} that the perceptual variability of responses, as measured by the LPIPS score~\citep{zhang2019robust}, is preserved under the PRC watermark. This is in contrast to \textbf{every other comparable scheme we tested}.
    \item We show in \Cref{figure:surrogate-training} that an image classifier fails to learn to detect the PRC watermark. The same image classifier quickly learns to detect \textbf{every other scheme we tested}.
\end{itemize}


We demonstrate the robustness of our scheme in \Cref{subsection:robustness}.
We find that watermark removal attacks fail to remove the PRC watermark without significantly degrading the quality of the image.
We test the robustness under ten different types of watermark removal attacks with varying strengths and compare PRC watermark to eight different state-of-the-art watermarking schemes. Among the three watermarking schemes with the lowest impact on image quality,\footnote{These are the DwtDct, DwtDctSvd, and PRC watermarks.} the PRC watermark is the most robust to all attacks.


Finally, we show in \Cref{subsection:decoding} that the PRC watermark can be used to encode and decode long messages in generated images.
The encoding algorithm is exactly the same, except that the user passes it a message, and the scheme remains heuristically undetectable.
These messages could be used to encode, for instance, timestamps, user IDs, digital signatures, or model specifications.
We find in \Cref{figure:decode-robust} that the robustness of the decoder for 512-bit messages is comparable to, although slightly less than, the robustness of the detector.
For non-attacked images, we show in \Cref{figure:decode-long} that we can increase the message capacity to at least 2500 bits.


\section{Related work} \label{section:related-work}
There is a rich history of digital watermarking techniques, ranging from conventional steganography to modern methods based on generative models.
Following the taxonomy in \cite{anwaves}, watermarking methods are categorized into two types: post-processing and in-processing schemes.

\begin{itemize}[leftmargin=*, itemsep=0pt, topsep=0pt]
\item \textbf{Post-processing schemes} embed the watermark after image generation and have been used for decades due to their broad applicability.
\item \textbf{In-processing schemes} modify the generative model or sampling process to embed the watermark directly in the generated content.
\end{itemize}

Our PRC watermark falls under the in-processing category.  Note that post-processing watermarks cannot be made undetectable without introducing extra modeling assumptions: One can always distinguish between a fixed image and any modification of it. 
We refer the reader to surveys \citep{CMBFK08, wan2022comprehensive, anwaves} for more on post-processing methods. Below, we focus on two popular in-processing techniques: Tree-Ring and Gaussian Shading watermarks.

\paragraph{Tree-Ring watermark.}
\cite{WKGG23} introduced Tree-Ring watermark, the first in-processing watermark that modifies the latent sampling distribution and employs an inverse diffusion process for detection.
Our PRC watermark builds on this framework but adopts a different latent distribution.
The Tree-Ring watermark works by fixing concentric rings in the Fourier domain of the latent space to be 0.
To detect the watermark, one uses DDIM inversion \citep{songdenoising} to estimate the initial latent, and the watermark is considered present if the latent estimate has unusually small values in the watermarked rings.
Follow-up works have extended this approach by refining the heuristic latent pattern in the watermarking process \citep{zhang2024robust, ci2024ringid, arabi2024hidden}.
However, under Tree-Ring's strategy, the initial latent significantly deviates from the Gaussian latent distribution, leading to reduced image quality and variability, as shown in \Cref{table:quality,table:diversity}.
Furthermore, the Tree-Ring watermark is a zero-bit scheme and cannot encode messages.
While the Tree-Ring watermark is robust to several attacks, it is highly susceptible to the adversarial surrogate attack since the latent pattern is easy to learn with a neural network.
In \Cref{figure:robustness-sweep}, we find that this attack removes the Tree-Ring watermark with minimal effect on image quality.

\paragraph{Gaussian Shading watermark.}
The basic Gaussian Shading watermark \citep{YZC+24} works by choosing a fixed quadrant of latent space as the watermarking key, and only generating images from latents in that quadrant.  Detection involves recovering the latent and determining if it lies unusually close to the watermarked quadrant. In their paper, \cite{YZC+24} include a proof that Gaussian Shading has ``lossless performance.'' However, this proof only shows that the distribution of a \emph{single} watermarked image is the same as that of a single un-watermarked image. Crucially, even standard quality metrics such as the FID \citep{heusel2017gans}, CLIP Score \citep{radford2021learning}, and Inception Score \citep{salimans2016improved} account for correlations between generated images, so their proof of lossless performance \emph{does not guarantee perfect quality under these metrics}.
Indeed, we find in \Cref{table:quality} that the Gaussian Shading watermark significantly degrades the FID and Inception Score.\footnote{Table 1 of \cite{YZC+24} appears to show that the FID against the COCO dataset is preserved under Gaussian Shading. However, from \href{https://github.com/bsmhmmlf/Gaussian-Shading/blob/master/watermark.py\#L53}{their code repository} it appears that this table is generated by \emph{re-sampling the watermarking key for every generation}. To be consistent with the intended use case, in this work we use the same random watermarking key to generate many images and compute the quality score.}
We expand on this further by measuring the ``variability'' of watermarked images.
Since images under the Gaussian Shading watermark all come from the same quadrant in latent space, we expect that the variability should be reduced.
We use the LPIPS perceptual similarity score \citep{ZIESW18} to measure the diversity among different watermarked images for a fixed prompt.
As shown in \Cref{table:diversity}, the perceptual similarity between images is significantly higher with the Gaussian Shading watermark, confirming the diminished variability.
The diminished variability turns out to be easily observable to the human eye --- we show an example of this in \Cref{figure:diversity}.

\paragraph{Undetectability.}
Undetectable watermarks were initially defined by \cite{CGZ24} in the context of language models.
Subsequent to \cite{CG24}, alternative constructions of PRCs have been given by \cite{GM24} and \cite{GG24}.
It would be interesting to see if these PRCs yield improved image watermarks, but we did not investigate this.

\section{Method} \label{section:scheme}

\subsection{Threat model}
We consider a setting where users make queries to a provider, and the provider responds to these queries with images produced by some image generation model.
In watermarking, the provider holds a \emph{watermarking key} that is used to sample from a modified, \emph{watermarked} distribution over images.
Anyone holding the watermarking key can, with high probability, distinguish between samples from the un-watermarked distribution and the watermarked distribution.

Since the watermark may be undesirable to some users, some of them may attempt to remove the watermark.
We are therefore interested in \emph{robust} watermarks, for which watermark detection still functions even when the image is subjected to a watermark removal attack.
We assume that the adversary performing such a removal attack is restricted in two ways.
First, the adversary should have weaker capabilities than the provider. If the adversary can generate their own images of equal quality to the provider, then they don't need to engage in watermark removal attacks.
Second, we are only interested in adversaries that produce high-quality images after the removal attack. If removal attacks require significantly degrading the quality of the image, then there is incentive to leave the watermark.

We are also interested in \emph{spoofing attacks}, whereby an adversary who doesn't know the watermarking key attempts to add a watermark to an un-watermarked image.
We only perform limited experiments on spoofing attacks, so we do not discuss the adversarial capabilities here.
However, we note that our techniques, together with the ideas on unforgeable public attribution from \cite{CG24}, immediately yield a scheme that is provably resilient to spoofing attacks.

\subsection{Overview of the PRC watermark}
\paragraph{Image generation and randomness recovery.}
Before describing our watermarking scheme, let us establish some notation.
Let $\Generate$ be a randomized algorithm that takes as input (1) a prompt string $\vpi \in \Sigma^*$ over some alphabet $\Sigma$ and (2) a standard Gaussian in $\R^n$, and produces an output in $\R^d$.
Our method applies to any such algorithm, but in this work, we are interested in the case that $\Generate$ is a generative image model taking prompts in $\Sigma^*$ and initial (random) latents in $\R^n$ to images in $\R^d$.

Some of the most popular generative image models today are latent diffusion models \citep{rombach2022high}, which consist of a diffusion model specified by a de-noising neural network $\epsilon$, a (possibly-randomized) function $f_{\epsilon}$ depending on $\epsilon$, a number of diffusion iterations $T$, and an autoencoder $(\calE, \calD)$.
For a latent diffusion model, $\Generate$ works as follows.

\renewcommand{\arraystretch}{1.25}
\begin{tabular}{r l}
    &\textbf{Algorithm} $\Generate(\vpi, \vz^{(T)}):$ \\
    &(1)\quad For $i = T$ down to $1$: \\
    &(2)\quad \hspace{2em} $\vz^{(i-1)} \gets f_{\epsilon}(\vpi, \vz^{(i)}, i)$ \\
    &(3)\quad $\vx \gets \calD(\vz^{(0)})$ \\
    &(4)\quad \textbf{Output } $\vx$
\end{tabular}

In words, $\Generate$ works by starting with a normally distributed latent and iteratively de-noising it.
The de-noised latent is then decoded by the autoencoder.
In order to produce an image for the prompt $\vpi$ using $\Generate$, we use $\Sample$ defined as follows. 

\renewcommand{\arraystretch}{1.25}
\begin{tabular}{r l}
    &\textbf{Algorithm} $\Sample(\vpi):$ \\
    &(1)\quad Sample $\vz^{(T)} \sim \calN(\vzero, \mI_n)$ \\
    &(2)\quad Compute $\vx \gets \Generate(\vpi, \vz^{(T)})$ \\
    &(3)\quad \textbf{Output } $\vx$
\end{tabular}

Detection of the watermark will rely on a separate algorithm, $\Recover$, that recovers an approximation of the latent in $\R^n$ from a given image $\vx \in \R^d$.
For latent diffusion models, the key component in $\Recover$ is an \emph{inverse diffusion process} $\delta$ that attempts to invert $\epsilon$ without knowledge of the text prompt $\vpi$.
There is also some (possibly-randomized) function $g_{\delta}$ that determines how $\delta$ is used to perform each update.

\renewcommand{\arraystretch}{1.25}
\begin{tabular}{r l}
    &\textbf{Algorithm} $\Recover(\vx):$ \\
    &(1)\quad Compute an initial estimate $\vz^{(0)} \gets \mathcal{E}(\vx)$ of the de-noised latent.\footnotemark \\
    &(2)\quad For $i = 0$ to $T-1$: \\
    &(3)\quad \hspace{2em} $\vz^{(i+1)} \gets g_{\delta}(\vz^{(i)}, i)$ \\
    &(4)\quad \textbf{Output } $\vz^{(T)}$
\end{tabular}
    \footnotetext{In fact, the algorithm of \cite{HLJBC23} further uses gradient descent on $\vz^{(0)}$ to minimize $\|\calD(\vz^{(0)}) - \vx\|$, initializing with $\vz^{(0)} = \calE(\vx)$. They call this ``decoder inversion,'' and it significantly reduces the recovery error.}

There has been increasing interest in tracing the diffusion model generative process back ($\Recover$). Diffusion inversion has been important for various applications such as image editing \citep{hertz2022prompt} and style transfer \citep{zhang2023inversion}. A commonly used method for reversing the diffusion process is Denoising Diffusion Implicit Models (DDIM) \citep{songdenoising} inversion, which leverages the formulation of the denoising process in diffusion models as an ordinary differential equation (ODE). However, the result of DDIM inversion, $\vz^{(T)}$, is an approximation even when the input text is known. For our implementation of $\Generate$, we employ Stable Diffusion with DPM-solvers \citep{lu2022dpm} for sampling. In our implementation of $\Recover$, we adopt the exact inversion method proposed in \cite{HLJBC23} for more accurate inversion.




\paragraph{Embedding and detecting the watermark.}
Our watermarking scheme works by passing to $\Generate$ a vector $\tilde{\vz}^{(T)}$ which is computationally indistinguishable from a sample from $\calN(\vzero,\mI_n)$.
To sample $\tilde{\vz}^{(T)}$, we rely on a cryptographic object called a pseudorandom code (PRC), introduced by \cite{CG24}.
A PRC is a keyed error-correction scheme with the property that any polynomial number of encoded messages are jointly indistinguishable from random strings.
For watermarking, it suffices to use a \emph{zero-bit} PRC which only encodes the message `1.'
If one wishes to encode long messages in the watermark, we can do this as well; see \Cref{section:watermark-details} for details on how this is accomplished.
For simplicity we focus on the zero-bit case in this section.

Our PRC consists of four algorithms, given in \Cref{section:prc}:
\begin{itemize}[leftmargin=*, itemsep=0pt]
    \item $\PRC.\KeyGen(n,\fpr,t)$ samples a PRC key $\key$, which will also serve as the watermarking key. The parameter $n$ is the block length, which in our case is the dimension of the latent space; $\fpr$ is the desired false positive rate; and $t$ is a parameter which may be increased for improved undetectability at the cost of robustness.
    \item $\PRC.\Encode_{\key}$ samples a PRC codeword.
    \item $\PRC.\Detect_{\key}(\vc)$ tests whether the given string $\vc$ came from the PRC.
    \item $\PRC.\Decode_{\key}(\vc)$ decodes the message from the given string $\vc$, if it exists. The decoder is slower and less robust than the detector.
\end{itemize}
As our PRC, we use the LDPC construction from \cite{CG24}, modified to handle soft decisions.
Essentially, this PRC works by sampling random $t$-sparse parity checks and using noisy solutions to the parity checks as PRC codewords.
For appropriate choices of parameters, \cite{CG24} prove that this distribution is cryptographically pseudorandom.
We describe how the PRC works in detail in \Cref{section:prc}, and we describe our watermarking algorithms in detail in \Cref{section:watermark-details}.

To set up our robust and undetectable watermark, we simply sample a key $\key$ using $\PRC.\KeyGen$.
To sample a watermarked image, we choose $\tilde{\vz}^{(T)}$ to be a sample from $\calN(\vzero,\mI_n)$ \emph{conditioned on having signs chosen according to the PRC} and then apply $\Generate$.
In more detail, we sample $\tilde{\vz}^{(T)}$ using the following algorithm.

\renewcommand{\arraystretch}{1.25}
\begin{tabular}{r l}
    &\textbf{Algorithm} $\PRCWat.\Sample_{\key}(\vpi):$ \\
    &(1)\quad Sample a PRC codeword $\vc \in \{-1,1\}^n$ using $\PRC.\Encode(\key)$ \\
    &(2)\quad Sample $\vy \sim \mathcal{N}(\vzero, \mI_n)$ \\
    &(3)\quad Let $\tilde{\vz}^{(T)} \in \mathbb{R}^n$ be the vector defined by $\tilde{z}^{(T)}_i = c_i \cdot |y_i|$ for all $i \in [n]$ \\
    &(4)\quad Compute $\vx \gets \Generate(\vpi, \tilde{\vz}^{(T)})$ \\
    &(5)\quad \textbf{Output} $\vx$
\end{tabular}

For a full description of the algorithm, see \Cref{alg:wat-sample}.

Since the signs of $\vz^{(T)} \sim \calN(\vzero, \mI_n)$ are uniformly random, the pseudorandomness of the PRC implies that any polynomial number of samples $\tilde{\vz}^{(T)}$ in $\PRCWat.\Sample$ are indistinguishable from samples $\vz^{(T)} \sim \calN(\vzero, \mI_n)$.
As $\Generate$ is an efficient algorithm, this yields \Cref{theorem:undetectability}, which says that our watermarking scheme is undetectable against $\poly[n]$-time adversaries for latent space of dimension $n$, as long as the underlying PRC is.

\begin{theorem}[Undetectability] \label{theorem:undetectability}
    Let $\PRC$ be any PRC, and let $\PRCWat.\Sample$ be as defined above.
    Then for any efficient algorithm $\calA$ and any $c > 0$,
    \[
        \abs{\Pr_{\key \sim \PRC.\KeyGen}[\adv^{\PRCWat.\Sample_{\key}}(\secparam)=1] - \Pr[\adv^{\Sample}(\secparam)=1]} \le O(n^{-c}).
     \]
\end{theorem}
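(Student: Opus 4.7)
The plan is to prove undetectability by a direct reduction: any efficient distinguisher $\calA$ between the oracles $\PRCWat.\Sample_{\key}$ and $\Sample$ yields an equally efficient distinguisher $\calB$ against the pseudorandomness of the underlying PRC. The key structural observation, which I would establish first as a small lemma, is that if $\vc$ is uniformly distributed in $\{-1,1\}^n$ and $\vy \sim \calN(\vzero, \mI_n)$ is independent of $\vc$, then the vector $\tilde{\vz}$ with $\tilde{z}_i = c_i \cdot |y_i|$ is distributed exactly as $\calN(\vzero, \mI_n)$. This is immediate from the fact that for a standard Gaussian $Z$, the sign $\sgn(Z)$ is uniform on $\{-1,1\}$ and independent of $|Z|$, so coordinate-wise the product $c_i \cdot |y_i|$ reproduces the standard normal distribution. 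Thus the only distributional difference between $\PRCWat.\Sample_{\key}$ and $\Sample$ lies in whether the sign pattern is a PRC codeword or a uniform string.

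Next I would describe the reduction. The PRC security game gives $\calB$ oracle access to either $\PRC.\Encode_{\key}$ (sampling a fresh codeword on each call) or to the uniform distribution on $\{-1,1\}^n$. On input the security parameter, $\calB$ simulates $\calA$ and answers each of $\calA$'s oracle queries $\vpi$ as follows: it queries its own oracle to obtain $\vc \in \{-1,1\}^n$, samples a fresh $\vy \sim \calN(\vzero, \mI_n)$, forms $\tilde{\vz}^{(T)}$ by $\tilde{z}^{(T)}_i = c_i \cdot |y_i|$, computes $\vx \gets \Generate(\vpi, \tilde{\vz}^{(T)})$, and returns $\vx$ to $\calA$. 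When $\calA$ halts, $\calB$ outputs whatever $\calA$ outputs. Because $\Generate$ is a polynomial-time randomized algorithm and the PRC oracle is efficient, $\calB$ is efficient whenever $\calA$ is.

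The correctness of the reduction follows by inspecting both worlds. If $\calB$'s oracle returns PRC codewords, the distribution of each answer to $\calA$ is exactly that of $\PRCWat.\Sample_{\key}(\vpi)$ by construction. If $\calB$'s oracle returns uniform strings, then by the lemma above each $\tilde{\vz}^{(T)}$ is distributed as $\calN(\vzero, \mI_n)$, independently across queries, so the answers are distributed exactly as $\Sample(\vpi)$. Hence $\calB$'s distinguishing advantage against the PRC equals $\calA$'s distinguishing advantage in the undetectability game. Invoking the pseudorandomness guarantee of the PRC (which gives advantage $O(n^{-c})$ for every constant $c > 0$ against polynomial-time adversaries) yields the claimed bound.

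The step I expect to require the most care is not the reduction itself, which is essentially a straightforward hybrid, but the exact-distribution claim: ensuring that the sign/magnitude decomposition truly reproduces $\calN(\vzero,\mI_n)$ when the signs are uniform, and that the $\vy$ used in each query is fresh and independent of the PRC output so that the joint distribution across the polynomially many queries lines up exactly with the PRC security game (rather than only marginally per query). Once that is nailed down, the rest is a routine reduction.
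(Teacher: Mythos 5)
Your proposal is correct and follows essentially the same route as the paper: the paper's argument is precisely that the sign vector of $\calN(\vzero,\mI_n)$ is uniform on $\{-1,1\}^n$ and independent of the magnitudes, so substituting PRC codewords for the signs is indistinguishable by PRC pseudorandomness, and applying the efficient algorithm $\Generate$ preserves computational indistinguishability. You merely spell out the reduction to the PRC security game more explicitly than the paper does, which is a reasonable elaboration rather than a different approach.
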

The notation $\adv^{\calO}(\secparam)$ means that $\adv$ is allowed to run in any time that is polynomial in $\secpar$, making an arbitrary number of queries to $\calO$.
For our experiments, we do not strictly adhere to the parameter bounds required for the pseudorandomness proof of \cite{CG24} to hold; as a result of this and the fact that we use small finite choices of parameters, our scheme should not be used for undetectability-critical applications.
See \Cref{section:practical-undetectability} for a discussion on this point.

To detect the watermark with the watermarking key, we use (roughly) the following algorithm.
As long as $\Recover$ reproduces a good enough approximation to the latent that was originally used to generate an image, $\PRCWat.\Detect$ will recognize the watermark.

\renewcommand{\arraystretch}{1.25}
\begin{tabular}{r l}
    &\textbf{Algorithm} $\PRCWat.\Detect_{\key}(\vx):$ \\
    &(1)\quad Compute $\vz^{(T)} \gets \Recover(\vx)$ \\
    &(2)\quad Let $\vc$ be the vector of signs of $\vz^{(T)}$ \\
    &(3)\quad Compute $\result \gets \PRC.\Detect_{\key}(\vc)$ \\
    &(4)\quad \textbf{Output} $\result$
\end{tabular}

For our actual detector, we use a slightly more complicated algorithm that accounts for the fact that coordinates of $\vz^{(T)}$ with larger magnitude are more reliable.
The complete algorithm is given in \Cref{alg:wat-detect}.

It turns out that, for low error rates, the PRC from \cite{CG24} can be used to encode and decode long messages using an algorithm called \emph{belief propagation}.
We can therefore include long messages in our watermark.
Our algorithm for decoding the message from an image is $\PRCWat.\Decode$, described in \Cref{alg:wat-decode}.
$\PRCWat.\Decode$ is slower and less robust than $\PRCWat.\Detect$, but we find that it still achieves an interesting level of robustness.

Finally, our PRC watermark allows the user to set a desired false positive rate, $\fpr$.
We prove \Cref{theorem:detector-fpr}, which says that our PRC watermark detector has false positive rate at most $\fpr$, in \Cref{section:detector-fpr}.

\begin{theorem}[False positive rate] \label{theorem:detector-fpr}
    Let $n, t \in \N$ and $\fpr > 0$. For any image $\vx$,
    \[
        \Pr_{\key \sim \PRCWat.\KeyGen(n, \fpr, t)}[\PRCWat.\Detect_{\key}(\vx) = \True] \le \fpr
    \]
    and
    \[
        \Pr_{\key \sim \PRCWat.\KeyGen(n, \fpr, t)}[\PRCWat.\Decode_{\key}(\vx) \ne \None] \le \fpr.
    \]
\end{theorem}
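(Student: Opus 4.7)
The plan is to exploit the fact that, in the theorem statement, the image $\vx$ is arbitrary but \emph{fixed}, while all probability is taken over the key. Since $\Recover$ is a deterministic function of $\vx$, the recovered latent $\vz^{(T)} \in \R^n$ and the resulting sign vector $\vc \in \{-1,+1\}^n$ fed into $\PRC.\Detect_\key$ (and the soft-information vector fed into $\PRC.\Decode_\key$) are determined by $\vx$ alone and independent of the randomness of $\PRCWat.\KeyGen$. Both inequalities therefore reduce to the analogous statements about the underlying PRC: for every fixed input string, the probability over a random key that $\PRC.\Detect_\key$ returns $\True$ (respectively, that $\PRC.\Decode_\key$ returns a non-$\None$ output) is at most $\fpr$.

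Next, I would recall how the LDPC PRC of \cite{CG24} is set up. Its key specifies a collection of random $t$-sparse parity checks on $\{0,1\}^n$, together with auxiliary randomness such as a one-time pad and an error mask. For an input $\vc$ independent of the key, each parity check evaluates to a uniform bit and the checks are pairwise independent, so the detector's test statistic $S(\vc, \key)$ --- essentially the number of satisfied, possibly reliability-weighted, parity checks --- has a distribution that does \emph{not} depend on $\vc$. The detector outputs $\True$ iff $S(\vc, \key) \ge \tau$, where $\tau$ is a threshold computed by $\PRCWat.\KeyGen$ from $(n, t, \fpr)$.

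The core of the proof is to show that $\PRCWat.\KeyGen$ can (and does) set $\tau$ so that $\Pr_\key[S(\vc,\key) \ge \tau] \le \fpr$ for \emph{every} fixed $\vc$. Because $S$ is a sum of pairwise-independent bounded random variables with a distribution independent of $\vc$, this is a routine calibration via a binomial tail bound (or, for the soft-decision version, a Hoeffding-type bound on a weighted sum). For the $\Decode$ claim, $\PRC.\Decode_\key$ returns non-$\None$ only when a candidate message produced by belief propagation passes an internal consistency check of the same statistical form; applying the same threshold-calibration argument, or building the $\fpr$ tolerance directly into the accept condition of $\Decode$, yields the second inequality.

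The main obstacle is checking that the soft-decision extension of the PRC detector --- which weights each parity check by the magnitudes of the corresponding latent coordinates --- still admits a clean, non-asymptotic tail bound at level $\fpr$. Once the test statistic is written as a sum of independent bounded random variables whose distribution is fixed independently of $\vc$, the claim is purely a matter of how conservatively $\PRCWat.\KeyGen$ chooses its thresholds, and no properties of images, diffusion, or $\Recover$ enter the argument; in particular, the pseudorandomness guarantee of the PRC is \emph{not} needed here, only the elementary probabilistic structure of the parity-check test under a fresh random key.
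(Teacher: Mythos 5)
Your overall reduction is the same as the paper's: fix $\vx$, note that $\Recover(\vx)$ and hence the soft-information vector $\vs$ are independent of the key, and push the claim down to a statement about $\PRC.\Detect$ and $\PRC.\Decode$ on a key-independent input. That part is right and is exactly how the paper proceeds. However, three of your supporting claims are off in ways that matter. First, you invoke \emph{pairwise} independence of the parity-check outcomes, but a Hoeffding-type exponential tail requires mutual independence; the paper obtains it from a specific structural fact you do not identify, namely that the rows of $\mP$ are linearly independent over $\F_2$ by construction, so XORing with the uniformly random one-time pad $\otp$ makes the signs $a_\vw = \prod_{j \in \vw}(-1)^{\otp_j}$ mutually independent uniform $\pm 1$ variables. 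Without pinning down where the independence comes from, the concentration step does not go through. Second, the threshold is \emph{not} computed by $\KeyGen$ from $(n,t,\fpr)$, and the distribution of the test statistic \emph{does} depend on the input: the statistic is $\sum_{\vw} \log\bigl((1+a_\vw \hat{s}_\vw)/2\bigr)$ whose summands' magnitudes depend on $\hat{s}_\vw = \prod_{j\in\vw} s_j$. The paper's detector uses a data-dependent threshold $\sqrt{C\log(1/\fpr)} + \tfrac{1}{2}\sum_\vw \log\bigl((1-\hat{s}_\vw^2)/4\bigr)$ with $C = \tfrac{1}{2}\sum_\vw \log^2\bigl((1+\hat{s}_\vw)/(1-\hat{s}_\vw)\bigr)$, and the tail bound is applied conditionally on the (key-independent) magnitudes, using a one-sided Hoeffding variant for variables bounded above with $\E X_\vw^2 \ge b_\vw^2$. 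This is precisely the ``main obstacle'' you flag but do not resolve, and it is the substance of the proof.

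Third, the $\Decode$ claim needs no concentration argument at all, and your description of it as ``the same threshold-calibration argument'' misses the actual mechanism: the key contains $\numtestbits = \lceil \log_2(1/\fpr)\rceil$ uniformly random test bits, and the decoder outputs non-$\None$ only if the belief-propagation output reproduces them exactly. For any input independent of the key this happens with probability at most $2^{-\numtestbits} \le \fpr$, full stop. So the skeleton of your argument is the paper's, but the two load-bearing steps --- mutual independence of the $a_\vw$ via linear independence of $\mP$ plus the one-time pad, and the conditional one-sided Hoeffding bound against a data-dependent threshold --- are either misstated or left as open obstacles.
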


In words, \Cref{theorem:detector-fpr} says that any image generated independently of the watermarking key has at most a probability of $\fpr$ of being identified as ``watermarked'' by our watermark detector or decoder.

\section{Experiments}








\subsection{Experiment setup}
In our primary experiments, we focus on text-to-image latent diffusion models, utilizing the widely adopted Stable Diffusion framework \citep{rombach2022high}.  Specifically, we evaluate the performance of various watermarking schemes using the \texttt{Stable Diffusion-v2.1}\footnote{\href{https://huggingface.co/stabilityai/stable-diffusion-2-1-base}{https://huggingface.co/stabilityai/stable-diffusion-2-1-base}} model, a state-of-the-art generative model for high-fidelity image generation.  Additionally, we explore applying PRC watermarking to other generative models, as demonstrated with VAE \citep{kingma2013vae} models in \Cref{section:vae}. All images are generated at a resolution of 512$\times$512 with a latent space of 4$\times$64$\times$64. During inference, we apply a classifier-free guidance scale of 3.0 and sample over 50 steps using DPMSolver \citep{lu2022dpm}. As described in \Cref{section:scheme}, we perform diffusion inversion using the exact inversion method from \cite{HLJBC23} to obtain the latent variable $\vz^{(T)}$. In particular, we use 50 inversion steps and an inverse order of 0 to expedite detection, balancing accuracy and computational efficiency. All experiments are conducted on NVIDIA H100 GPUs.  

\begin{figure}[t]
\centering
    \includegraphics[width=0.99\textwidth]{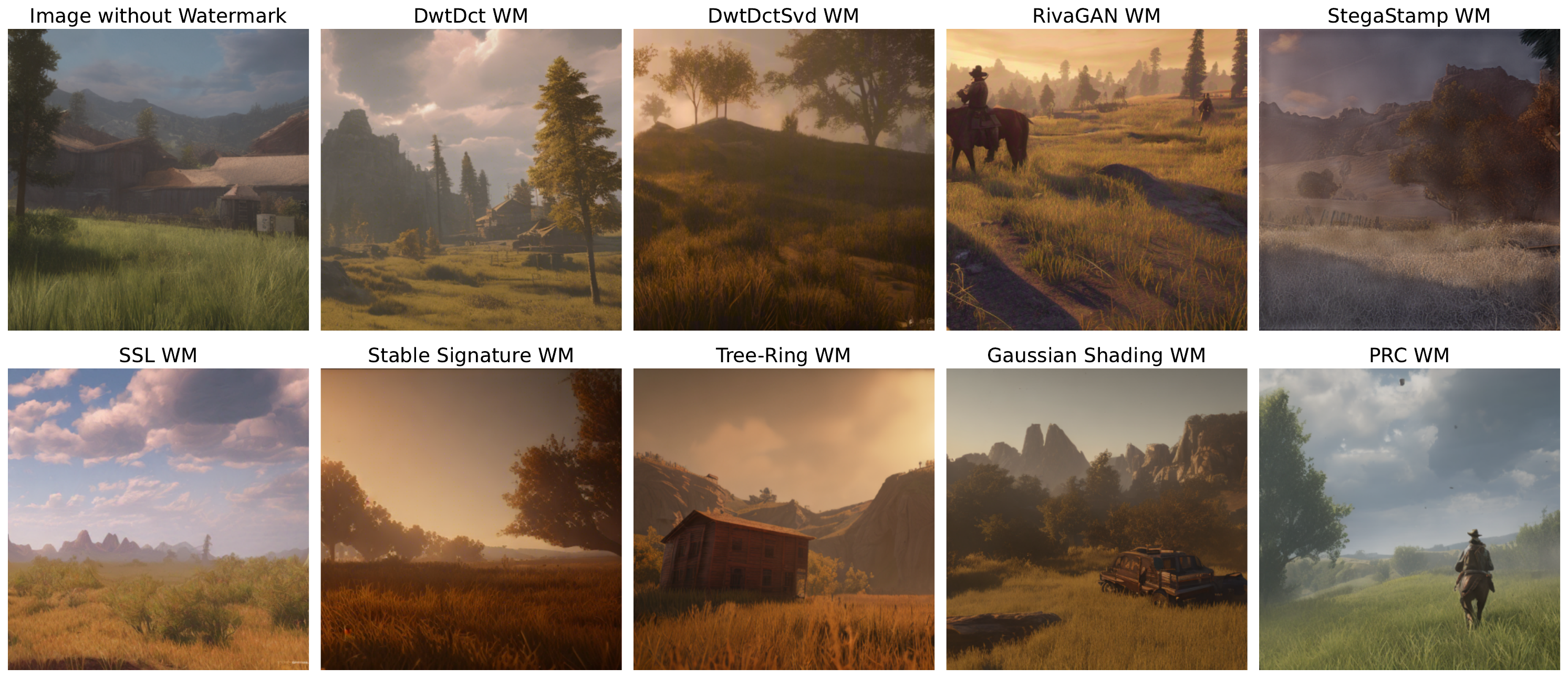}
        \vspace{-0.5em}
    \caption{Examples of different watermarks applied to the image generated with the prompt: \textit{``red dead redemption 2, cinematic view, epic sky, detailed, concept art, low angle, high detail, warm lighting, volumetric, godrays, vivid, beautiful, trending on artstation, by jordan grimmer, huge scene, grass, art greg rutkowski''}. For post-processing watermark methods, the watermarks directly perturb the un-watermarked image. Notably, the StegaStamp watermark introduces visible blurring artifacts.}
    \label{figure:example}
\vspace{-1em}
\end{figure}

\paragraph{Watermark baselines.}
We conduct comparative evaluations against various watermarking schemes, including in-, and post-processing techniques, as defined in \Cref{section:related-work}. For post-processing methods, we compare with DwtDct \citep{al2007combined}, DwtDctSvd \citep{navas2008dwt}, RivaGAN~\citep{zhang2019robust}, StegaStamp \citep{tancik2020stegastamp}, and SSL Watermark \citep{fernandez2022watermarking}. For in-processing methods, we include a comparison with Stable Signature \citep{fernandez2023stable}, Tree-Ring \citep{WKGG23} and Gaussian Shading \citep{YZC+24}. Most baseline methods are designed to embed multi-bit strings within an image. Specifically, we set 32 bits for DwtDctSvd, RivaGAN, and SSL Watermark; 96 bits for StegaStamp; and 48 bits for Stable Signature. We employ publicly available code for each method, using the default inference and fine-tuning parameters specified in original respective papers for post- and in-processing methods. For Tree-Ring and Gaussian Shading watermarks, we use the same diffusion model and inference parameter settings as those used in PRC. We encode 512 random bits in the PRC watermark. If the decoder is successful, then with high probability, the bits are recovered correctly. \Cref{figure:example} illustrates examples of different watermarking schemes applied to a specific text prompt, highlighting the visual impact of each approach.

\paragraph{Datasets and evaluation.}
We evaluate watermarking methods on two datasets: MS-COCO \citep{lin2014microsoft} and the Stable Diffusion Prompt (SDP) dataset.\footnote{\href{https://huggingface.co/datasets/Gustavosta/Stable-Diffusion-Prompts}{https://huggingface.co/datasets/Gustavosta/Stable-Diffusion-Prompts}} We generate 500 un-watermarked images using MS-COCO captions or SDP prompts, and apply post-processing watermark methods to generate watermarked images. In-processing methods directly generate watermarked images from prompts. To assess the performance of the different watermarking schemes, we primarily examine four aspects: effectiveness, image quality, robustness, and detectability. For effectiveness, which involves performing binary classification between watermarked and un-watermarked images, we calculate the true positive rate (TPR) at a fixed false positive rate (FPR). Specifically, we report TPR@FPR=0.01. Without any attacks, the PRC watermark achieves TPR=1.0@FPR=0.01.
Note that for PRC watermarking, the FPR is set at 1\%, though it can be easily made smaller depending on the use case (see long message experiments in \Cref{subsection:decoding}).

\newcommand{\fullres}[2]{$#1_{#2}$}

\begin{table}[t]
\setlength{\tabcolsep}{3pt} 
        \caption{FID, CLIP and Inception Score (\fullres{\text{Mean}}{\text{Standard Error}}) for different watermarks in both COCO and Stable Diffusion Prompts datasets. The PRC watermark is the only one that preserves quality across all three metrics in both datasets. Scores outside of 3 standard errors are highlighted in \red{red}.} \label{table:quality}
        \vspace{-0.5em}
    \centering
    \resizebox{0.99\textwidth}{!}{
    \begin{tabular}{lcccccc}
        \toprule
        \multirow{2}{*}{\textbf{Watermark}} & \multicolumn{3}{c}{\textbf{COCO Dataset}} & \multicolumn{3}{c}{\textbf{ Stable Diffusion Prompts Dataset}} \\
        \cmidrule(lr){2-4} \cmidrule(lr){5-7} 
        & \textbf{FID} $\downarrow$ & \textbf{CLIP Score} $\uparrow$ & \textbf{Inception Score} $\uparrow$ & \textbf{FID} $\downarrow$ & \textbf{CLIP Score} $\uparrow$ & \textbf{Inception Score} $\uparrow$ \\
        \midrule
        Original & \fullres{76.3987}{0.3120} & \fullres{0.4792}{0.0025} & \fullres{17.5430}{0.1219} & \fullres{63.4625}{0.2507} & \fullres{0.6119}{0.0018} & \fullres{7.4969}{0.0905} \\ \hdashline
        DwtDct & \fullres{76.5676}{0.2237} & \fullres{0.4761}{0.0022} & \fullres{17.4686}{0.1228} & \fullres{63.6912}{0.2588} & \red{\fullres{0.6047}{0.0018}} & \red{\fullres{7.1346}{0.0951}} \\
        DwtDctSvd & \fullres{76.3322}{0.2739} & \fullres{0.4727}{0.0022} & \fullres{17.4234}{0.1334} & \red{\fullres{64.4768}{0.2147}} & \red{\fullres{0.5945}{0.0016}} & \red{\fullres{7.1253}{0.0894}} \\
        RivaGAN & \red{\fullres{77.7440}{0.2494}} & \fullres{0.4719}{0.0025} & \fullres{17.2669}{0.1298} & \red{\fullres{65.7144}{0.2511}} & \red{\fullres{0.6064}{0.0017}} & \red{\fullres{7.1828}{0.0956}} \\
        StegaStamp & \red{\fullres{79.8856}{0.2505}} & \red{\fullres{0.4693}{0.0023}} & \red{\fullres{16.8832}{0.1307}} & \red{\fullres{66.8853}{0.2613}} & \fullres{0.6103}{0.0015} & \red{\fullres{6.3343}{0.1003}} \\
        SSL & \red{\fullres{77.9346}{0.2254}} & \red{\fullres{0.4707}{0.0018}} & \fullres{17.1920}{0.1277} & \red{\fullres{65.0303}{0.2434}} & \red{\fullres{0.6061}{0.0008}} & \fullres{7.0923}{0.5629} \\ \hdashline
        Stable Signature & \red{\fullres{78.2577}{0.2634}} & \red{\fullres{0.4704}{0.0014}} & \red{\fullres{16.8753}{0.1317}} & \red{\fullres{70.1263}{0.2539}} & \red{\fullres{0.5941}{0.0013}} & \red{\fullres{6.8113}{0.1220}} \\
        Tree-Ring & \red{\fullres{77.3445}{0.1733}} & \fullres{0.4795}{0.0035} & \fullres{17.3989}{0.1399} & \red{\fullres{68.7192}{0.1572}} & \red{\fullres{0.5964}{0.0013}} & \fullres{7.4173}{0.0940} \\
        Gaussian Shading & \red{\fullres{77.9279}{0.2168}} & \fullres{0.4766}{0.0026} & \red{\fullres{17.0658}{0.0762}} & \red{\fullres{69.9333}{0.1237}} & \fullres{0.6132}{0.0013} & \fullres{7.3035}{0.0723} \\
        \rowcolor{gray!20}
        PRC & \fullres{76.5979}{0.2746} & \fullres{0.4773}{0.0039} & \fullres{17.4734}{0.1677} & \fullres{63.7350}{0.3511} & \fullres{0.6146}{0.0014} & \fullres{7.5000}{0.0817} \\
        \bottomrule
    \end{tabular}
    }
\end{table}

\subsection{Quality and detectability}\label{subsection:experiments-quality}

\begin{wraptable}{r}{0.4\textwidth} 
\vspace{-1em}
    \caption{LPIPS scores \citep{ZIESW18} for in-processing schemes. Smaller scores mean generated images were more similar according to the LPIPS perceptual metric.} 
    \label{table:diversity} 
\small
    \centering
    \begin{tabular}{lc}
    
        \toprule
        \textbf{Watermark} & \textbf{Variability} \\
        \midrule
        Original & \fullres{0.7570}{0.0018} \\ \hdashline
        Stable Signature & \fullres{0.7313}{0.0020} \\
        Tree-Ring & \fullres{0.7413}{0.0021} \\
        Gaussian Shading & \fullres{0.6503}{0.0021} \\
        \rowcolor{gray!20}
        PRC & \fullres{0.7589}{0.0019} \\
        \bottomrule
    \end{tabular}
\vspace{-1em}
\end{wraptable}

To evaluate the image quality of watermarked images, we compute the Frechet Inception Distance (FID) \citep{heusel2017gans}, CLIP Score \citep{radford2021learning}, and Inception Score \citep{salimans2016improved} to measure the distance between generated watermarked and un-watermarked images, and between watermarked and real images.
For our comparison to real images, we use the MS-COCO-2017 training set; for the comparison to un-watermarked images, we use 8,000 images generated by the un-watermarked diffusion model using prompts from the SDP dataset.
We calculate FID and CLIP Scores over five-fold cross-validation and report the mean and standard error. To assess perceptual variability (diversity), we select 10 diverse prompts from the PromptHero website\footnote{\href{https://prompthero.com/}{https://prompthero.com/}} and use different in-processing watermark methods to generate 100 images for each prompt. We calculate perceptual similarity for all image pairs using the LPIPS \citep{zhang2019robust} score, averaging the results over the 10 prompts and reporting the standard error. Higher LPIPS scores indicate better variability for a given prompt. This evaluation is essential since, for image generation tasks, users typically generate multiple images from a single prompt and then select the best one~(e.g., Midjourney).

Table \ref{table:quality} presents the empirical results for FID, CLIP, and Inception Scores for different watermarking schemes on both the COCO and SDP datasets. The table compares original image quality, post-processing watermark quality, and in-processing watermark quality (separated by dashed lines). We observe that StegaStamp results in the most significant quality degradation among post-processing watermark schemes, and the PRC watermark is the only method that consistently preserves image quality across all three metrics on both datasets. Table \ref{table:diversity} presents the results of the variability analysis. Since post-processing methods are expected to have minimal impact on image variability, they are excluded from this table. The PRC watermark demonstrates variability comparable to un-watermarked images, outperforming the other in-processing schemes in this regard.

To evaluate detectability, we use ResNet18 \citep{he2016deep} as the backbone model and train it on 7,500 un-watermarked images and 7,500 watermarked images (or 7,500 images watermarked with key 1 and 7,500 with key 2) to perform binary classification. Each experiment tests different watermarking schemes, with results shown in \Cref{figure:surrogate-training}. For the PRC watermark, the neural network slowly converges to perfect detection on the training set but achieves only 50\% accuracy (random guess) on the validation set, indicating that the network is memorizing the training samples rather than learning the watermark pattern. In contrast, for all other schemes, the network performs perfectly on the validation set, demonstrating that the watermark is learnable.

\begin{figure}[t]
    \centering
    \begin{subfigure}{\textwidth}
        \centering
        \includegraphics[width=\textwidth]{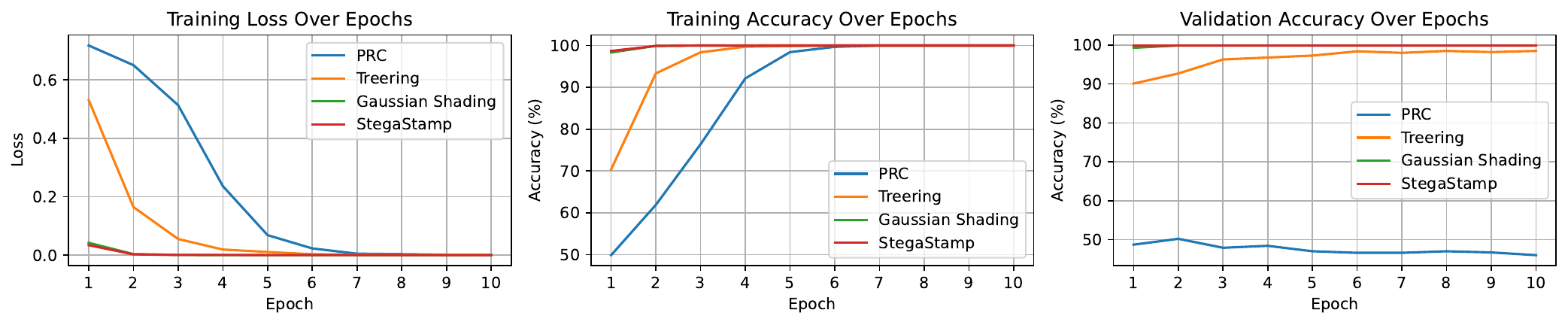}
    \end{subfigure}
    \begin{subfigure}{\textwidth}
        \centering
        \includegraphics[width=\textwidth]{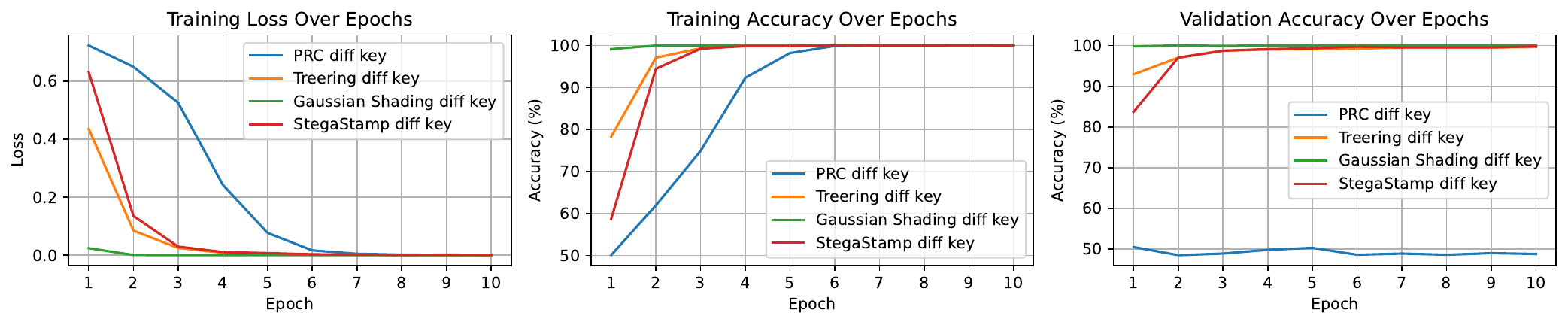}
    \end{subfigure}
    \vspace{-2em}
        \caption{Top: Training a model to detect the watermark without the key. Bottom: Training a model to distinguish between watermarked images generated with different watermarking keys.}
        \label{figure:surrogate-training}
        \vspace{-1em}
\end{figure}

\subsection{Robustness of the Detector} \label{subsection:robustness}

To comprehensively evaluate the robustness of the PRC watermark and compare it to baseline watermarking methods, we tested nine distinct watermarking techniques against ten different types of attacks. Detailed descriptions of the attack configurations can be found in \Cref{section:additional-results}.

The robustness of the various watermarking methods under these attacks is shown in \Cref{figure:robustness-sweep}.
We evaluated the quality of the attacked images using PSNR, SSIM, and FID metrics, comparing them to the original watermarked images. Notably, the PRC watermark demonstrates high resilience to most attacks. Even under sophisticated attacks, no method successfully reduced the true positive rate (TPR) below 0.99 while keeping the FID score under 70.
This demonstrates that current watermark removal techniques struggle to erase our watermark without significantly degrading image quality.
For instance, as shown in \Cref{figure:robustness-sweep}, a JPEG compression attack with a quality factor of 20 only reduced the TPR from 1.0 to 0.94, but the resulting images displayed noticeable blurriness and a loss of detail (see \Cref{figure:jpeg20-example}).
Finally, in \Cref{figure:t2} we demonstrate increased robustness for $t=2$.\footnote{For our other experiments we set $t=3$. See \Cref{section:prc} for details on the meaning of the parameter $t$.}
However, for $t=2$ there exist fast attacks on the undetectability of the PRC watermark, so we do not explore this choice further.

\begin{figure}[t]
\vspace{-1.5em}
\centering
\includegraphics[width=0.44\textwidth]{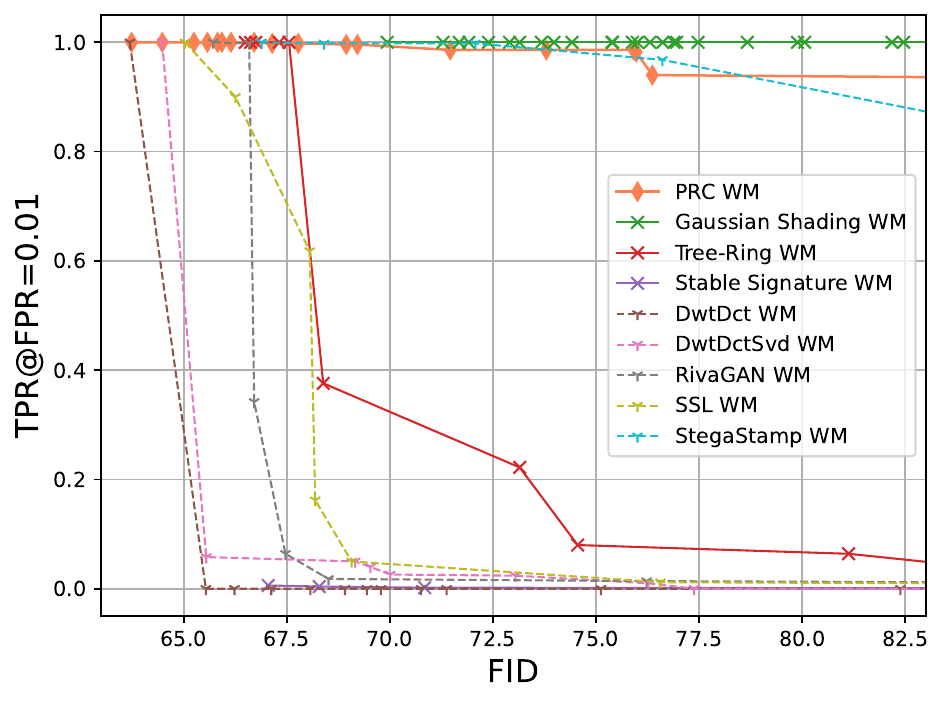}
\includegraphics[width=0.44\textwidth]{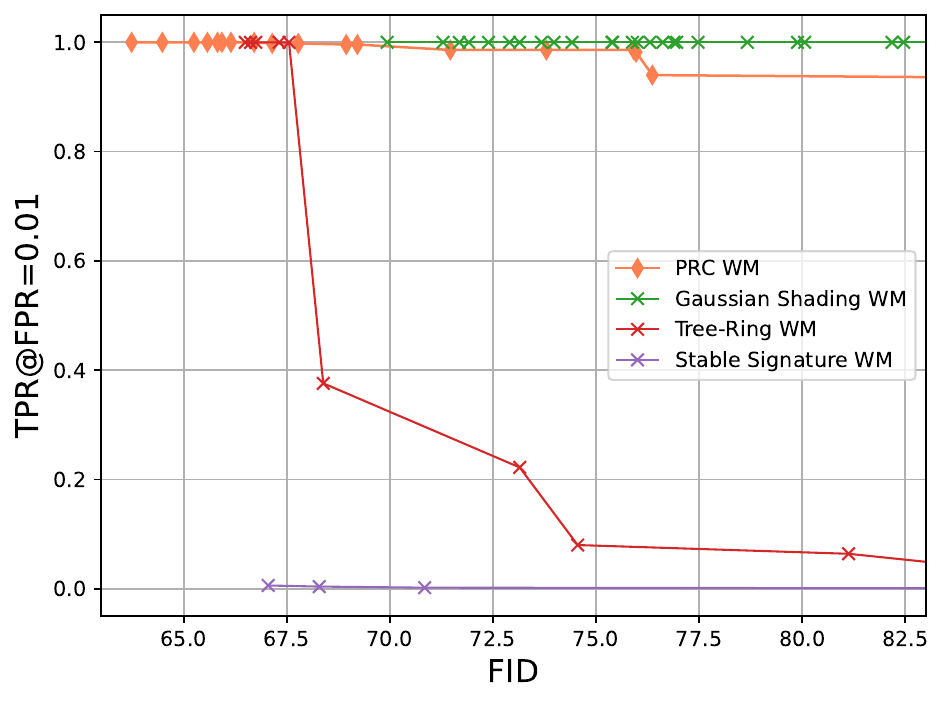}
\vspace{-1em}
\caption{Robustness under the strongest attacks, excluding the embedding attack. We show all points from the corresponding plot in \Cref{figure:robustness-sweep} for which there is no other point with a higher FID and TPR. In the figure on the right, we only include the in-processing watermarks. The TPR for the PRC watermark drops after the FID reaches 75; this corresponds to the JPEG 20 attack, of which we give an example in \Cref{figure:jpeg20-example}.}
\label{figure:robustness-sweep-best}
\centering
\end{figure}

\begin{figure}[ht]
\vspace{-1em}
    \centering
    \includegraphics[width=0.8\linewidth]{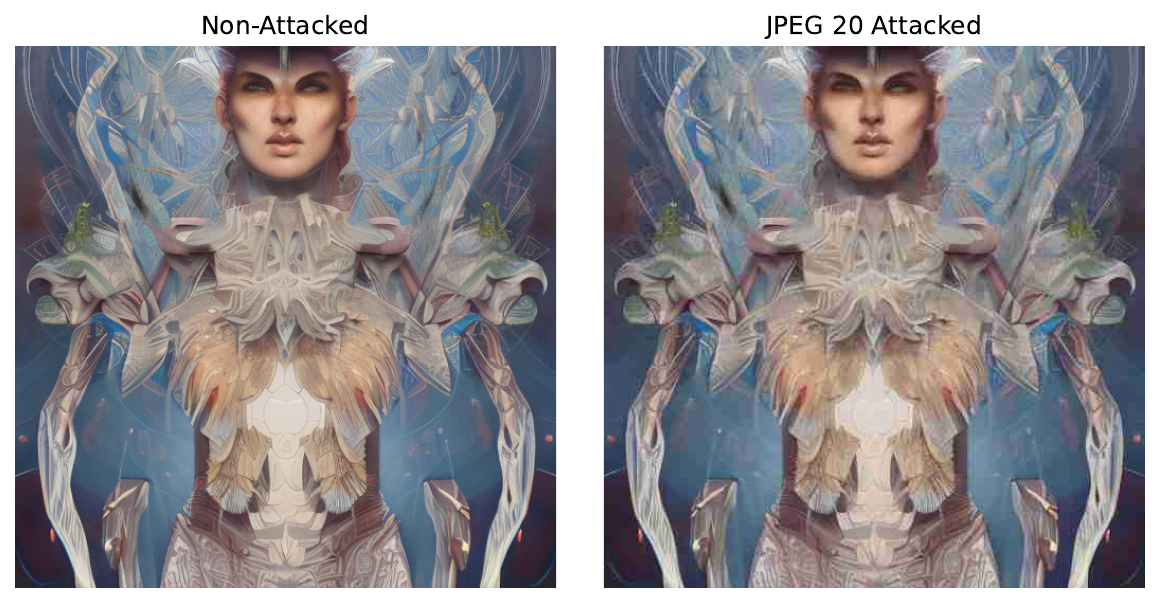}
    \vspace{-1em}
    \caption{Example images under the JPEG 20 attack with a PSNR of 28.39. Notice the blurriness and lack of detail in the attacked image.}
    \label{figure:jpeg20-example}
\end{figure}

\subsection{Encoding long messages in the watermark} \label{subsection:decoding} 
The use of a PRC allows us to embed long messages in our watermarks, as described in \Cref{section:watermark-details}.
We find in \Cref{figure:decode-robust} that the decoder is highly robust for 512-bit messages, although the detector is slightly more robust in this case.
We find in \Cref{figure:decode-long} that the decoder can reliably recover up to 2500 bits of information if the images are not subjected to removal attacks.

\subsection{Security of the PRC watermark under spoofing attacks} \label{subsection:spoofing}
To test the spoofing robustness of different watermarks, we followed the approach in \cite{saberi2023robustness}, aiming to classify non-watermarked images as watermarked (increasing the false positive rate). Spoofing attacks can damage the reputation of generative model developers by falsely attributing watermarks to images. We used a PGD-based \citep{MMS+18} method similar to that of the surrogate model adversarial attacks, flipping the surrogate model's prediction from un-watermarked to watermarked. Just as with the adversarial surrogate attack, this attack cannot work against any undetectable watermark such as PRC watermark.
We present our results in \Cref{figure:spoof-att}.

\subsection{Possibility of extension}
The PRC watermark can also be applied to other generative models,  particularly those sampling from Gaussian distributions.
We have set up a demo experiment working for traditional VAE models, as detailed in \Cref{section:vae}.
We would also be interested to see the PRC watermark applied to emerging generative models such as Flow matching \citep{lipman2022flow}; whether or not this is possible hinges only on the existence of a suitable $\Recover$ algorithm.

\section{Conclusion}
We give a new approach to watermarking for generative image models that incurs no observable shift in the generated image distribution and encodes long messages.
We show that these strong guarantees do not preclude strong robustness: Our watermarks achieve robustness that is competitive with state-of-the-art schemes that incur large, observable shifts in the generated image distribution.



\bibliography{iclr2025_conference}
\bibliographystyle{iclr2025_conference}
\appendix
\newpage

\begin{figure}[t]
\includegraphics[width=\textwidth]{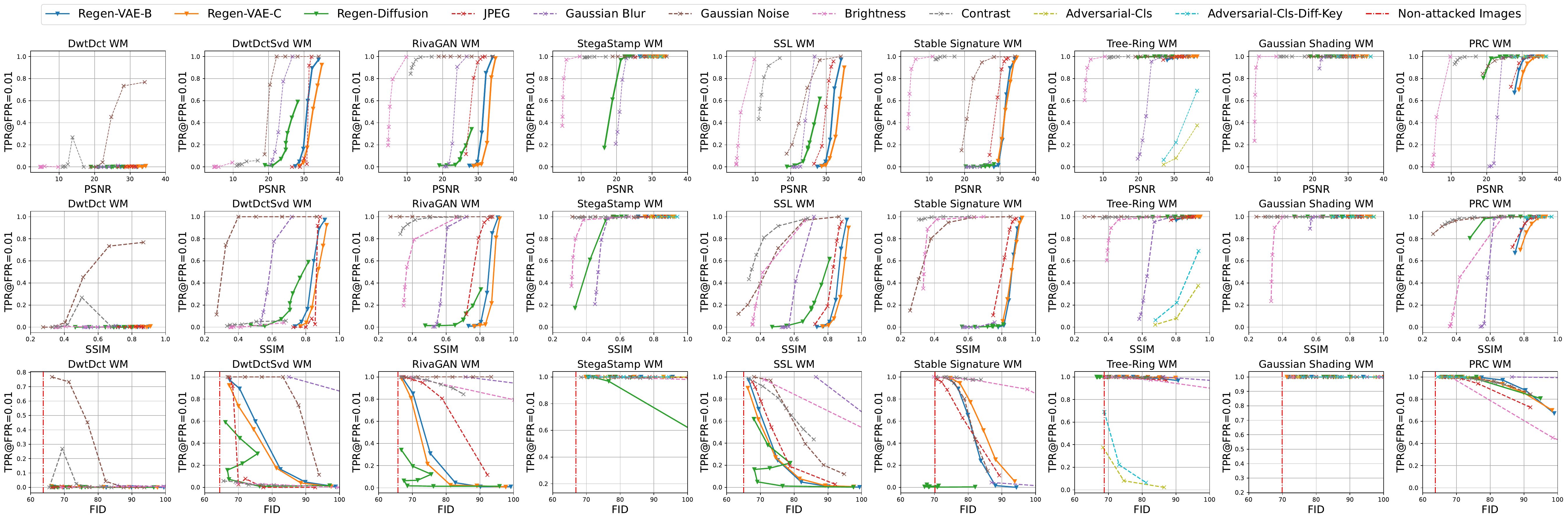}
\caption{Robustness of various watermarking schemes. PSNR and SSIM are used to measure the similarity between a \emph{single} original image and attacked image. FID is used to measure distance between the \emph{distribution} of un-watermarked images and attacked images. The vertical dotted red line in the FID plots is the FID for un-perturbed watermarked images. Note that the strange behavior of the FID for certain watermarks under the Regen-Diffusion attack can be explained by the attack simply correcting its own errors.}
\label{figure:robustness-sweep}
\centering
\end{figure}

\begin{figure}[ht]
    \centering
    \includegraphics[width=\textwidth]{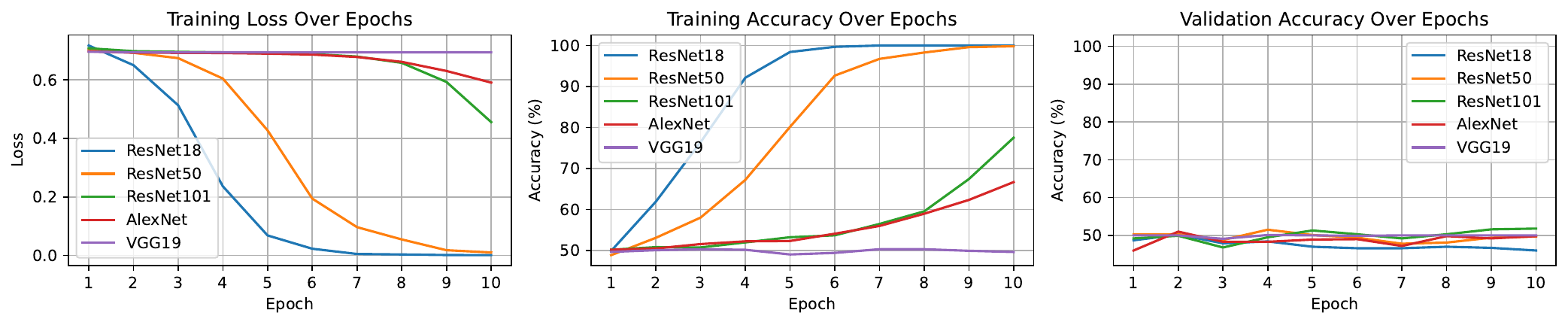}
    \caption{Training various models to detect the PRC watermark without the key.}
    \label{fig:surrogate-training-diffmodel}
\end{figure}

\begin{figure}[ht]
    \centering
    \includegraphics[width=\textwidth]{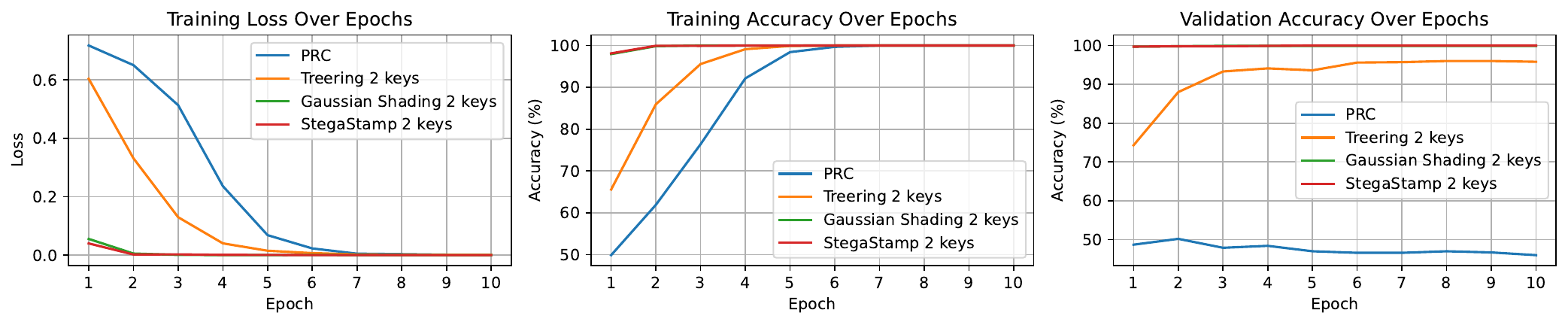}
    \includegraphics[width=\textwidth]{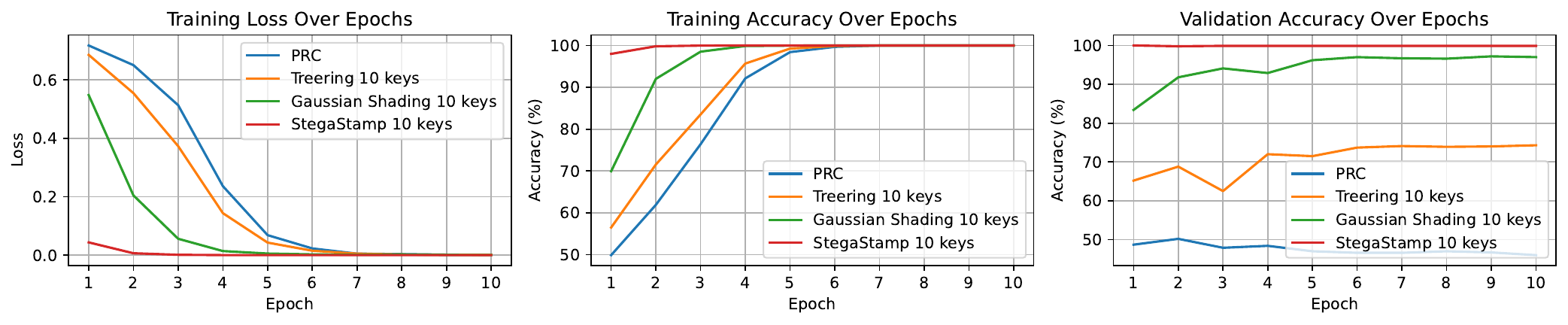}
    \caption{Detecting other watermarks with two (top) and ten (bottom) keys used at random, using ResNet18. Even with ten keys, ResNet18 quickly learns to detect the watermark for all schemes except ours with significant advantage.}
    \label{fig:surrogate-training-2k}
\end{figure}

\section{Additional experiment results and details on robustness}
\subsection{Additional experiment results}\label{section:additional-results}
The figures included in this section are:
\begin{itemize}
    \item \Cref{figure:robustness-sweep}, a comprehensive evaluation of watermarking schemes under the attacks described in \Cref{section:robustness-details}.
    \item \Cref{fig:surrogate-training-diffmodel}, testing the ability of various models to detect the presence of our PRC watermark without knowing the key.
    \item \Cref{fig:surrogate-training-2k}, testing the ability of ResNet18 to detect other watermarks when 2 keys or 10 keys are used --- i.e., each sample is generated by first choosing a random key and then generating a watermarked image according to that key.
    \item \Cref{figure:emb-att}, the performance of the embedding attack on in-processing watermarks.
    \item \Cref{figure:t2}, a brief evaluation of the robustness of our PRC watermark with $t=2$.
    \item \Cref{figure:decode-robust}, a brief evaluation of the robustness of our PRC watermark decoder for 512 bits.
    \item \Cref{figure:spoof-att}, the performance of the spoofing attack against in-processing watermarks.
    \item \Cref{figure:embedding-examples}, example images under the embedding attack.
    \item \Cref{figure:diversity}, example images for the variability results.
    \item \Cref{figure:decode-long}, the length of messages which can be reliably encoded and decoded with out PRC watermark when there is no watermark removal attack.
\end{itemize}


\begin{figure}[htbp]
\centering
\includegraphics[width=0.6\textwidth]{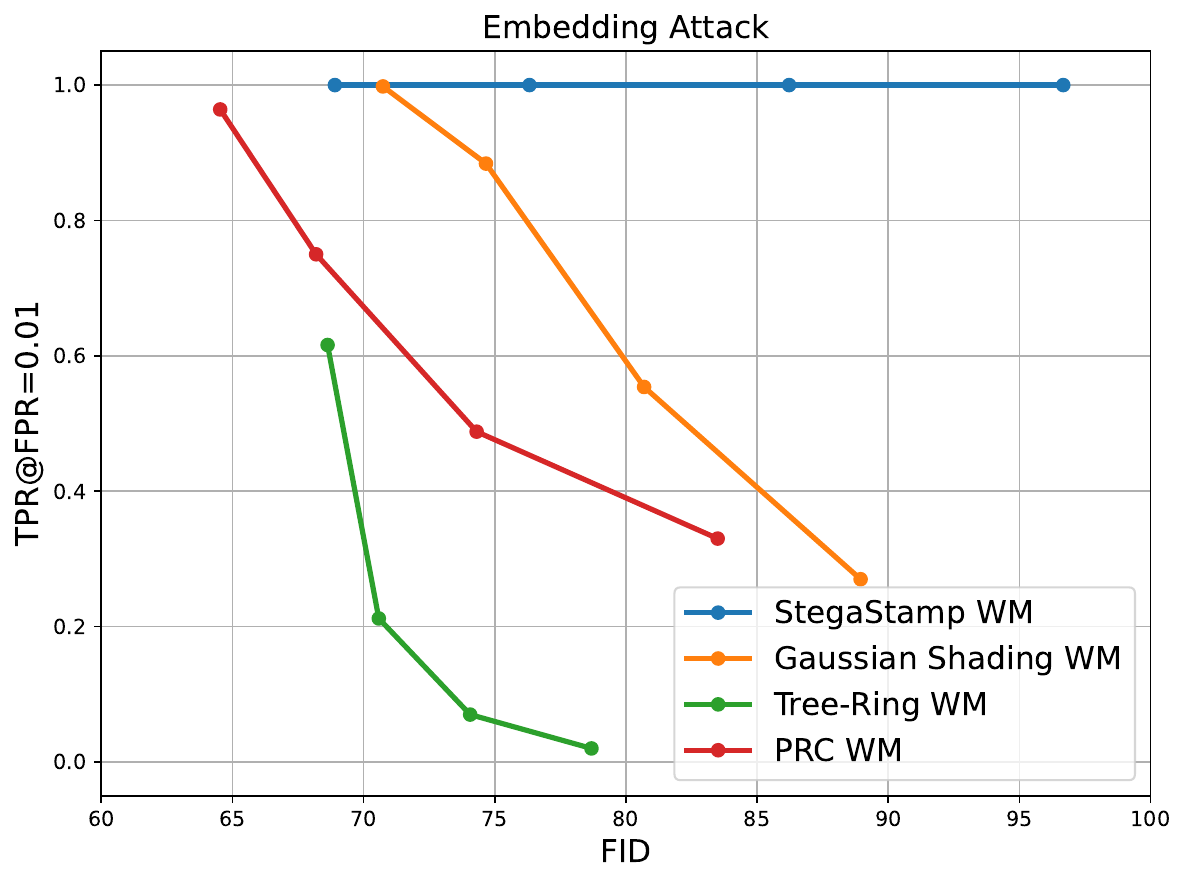}
\caption{Embedding attack for different watermarks. Only the PRC watermark can attain FID below a certain threshold. Above this threshold, StegaStamp is the strongest scheme we tested against the embedding attack. The embedding attack is quite powerful, as it assumes the attacker knows the VAE used in the diffusion model for embedding latents. However, its effectiveness could be mitigated by employing an adversarially robust VAE encoder or keeping the VAE component of the diffusion model private.}
\label{figure:emb-att}
\centering
\end{figure}

\begin{figure}[htbp]
\centering
\includegraphics[width=0.99\textwidth]{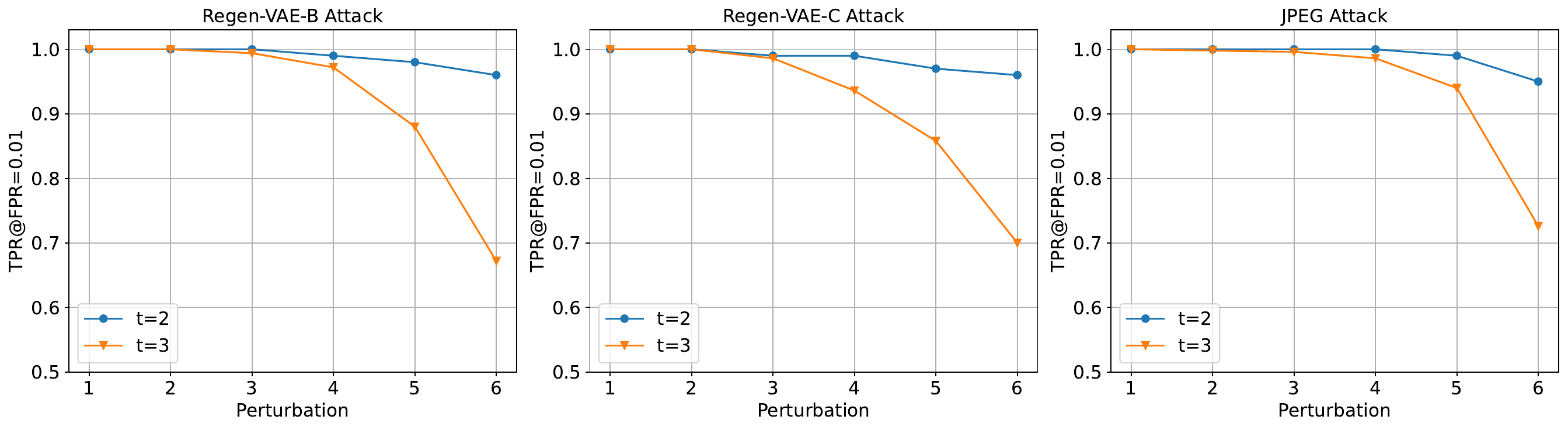}
\caption{We observe improved robustness for $t=2$.}
\label{figure:t2}
\centering
\end{figure}

\begin{figure}[htbp]
\centering
\includegraphics[width=0.99\textwidth]{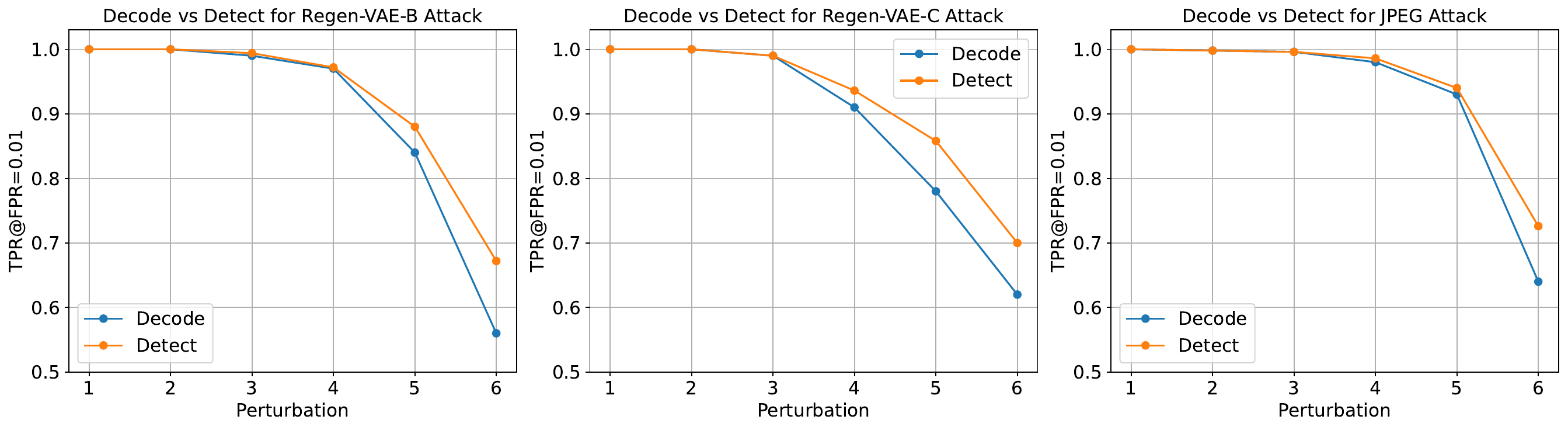}
\caption{Comparison between robustness of the decoder for 512 bits and the detector. The detector is faster and consistently more robust than the decoder, but the detector does not recover messages in the watermark. The TPR for the decoder is the rate at which the message is correctly decoded.}
\label{figure:decode-robust}
\centering
\end{figure}

\begin{figure}[htbp]
\centering
\includegraphics[width=0.6\textwidth]{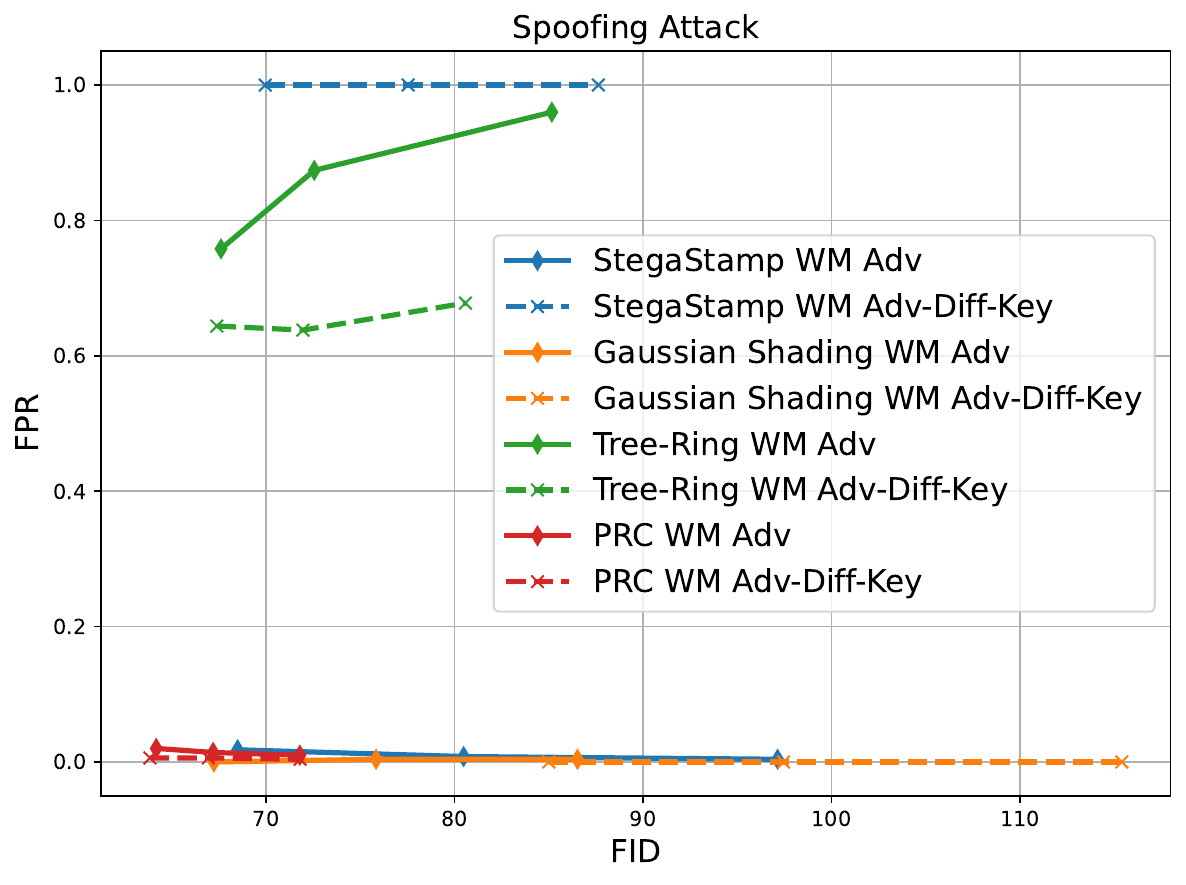}
\caption{Spoofing attack results. Tree-Ring and StegaStamp are vulnerable to spoofing attacks. Even with the target FPR set to 0.01, adversaries can significantly raise the FPR, causing the watermark detector to misclassify unwatermarked images as watermarked, which can damage the watermark owner’s reputation. The spoofing attack does not affect undetectable watermarks like the PRC watermark.}
\label{figure:spoof-att}
\centering
\end{figure}

\begin{figure}[htbp]
    \centering
    
    \begin{subfigure}{\textwidth}
        \centering
        \includegraphics[width=0.4\linewidth]{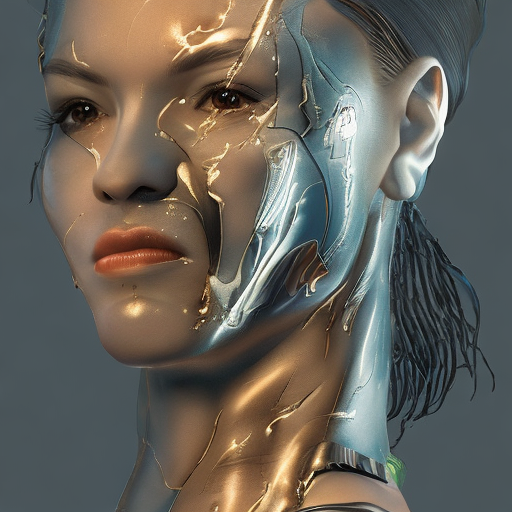}
        \caption{Original image}
    \end{subfigure}
    
    \vspace{1em}
    
    \begin{subfigure}{0.4\textwidth}
        \centering
        \includegraphics[width=\linewidth]{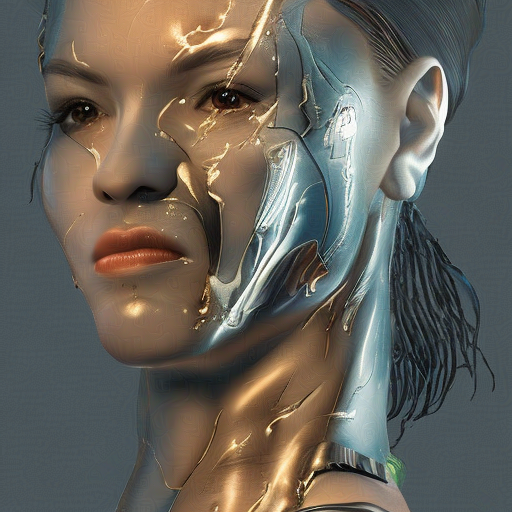}
        \caption{Strength 2}
    \end{subfigure}%
    \hspace{4em}
    \begin{subfigure}{0.4\textwidth}
        \centering
        \includegraphics[width=\linewidth]{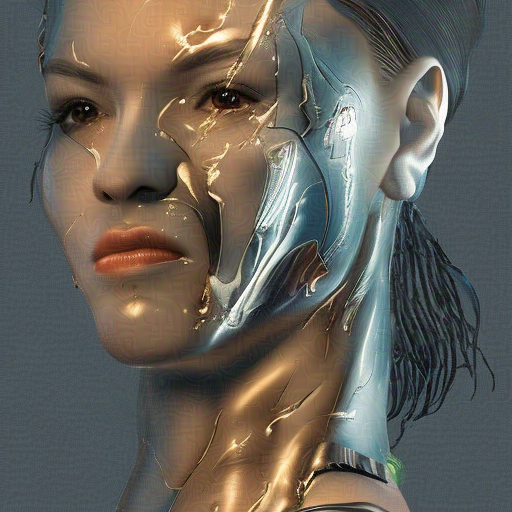}
        \caption{Strength 4}
        \label{fig:strength4}
    \end{subfigure}

    \begin{subfigure}{0.4\textwidth}
        \centering
        \includegraphics[width=\linewidth]{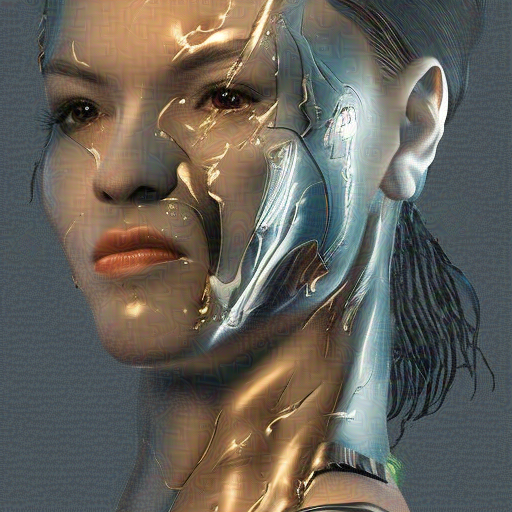}
        \caption{Strength 6}
    \end{subfigure}%
    \hspace{4em}
    \begin{subfigure}{0.4\textwidth}
        \centering
        \includegraphics[width=\linewidth]{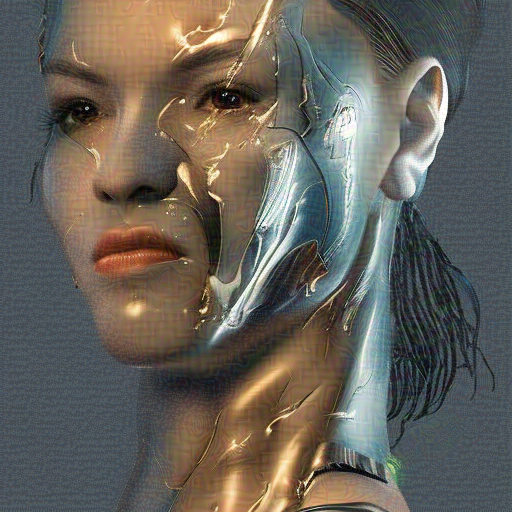}
        \caption{Strength 8}
    \end{subfigure}
    
    \caption{Example images under the embedding attack. Even the strength-2 embedding attack, for which the PRC attains a detection rate of over 95\%, noticeably deteriorates the image quality on close inspection.}
    \label{figure:embedding-examples}
\end{figure}

\begin{figure}[htbp]
    \centering
    \includegraphics[width=1.0\linewidth]{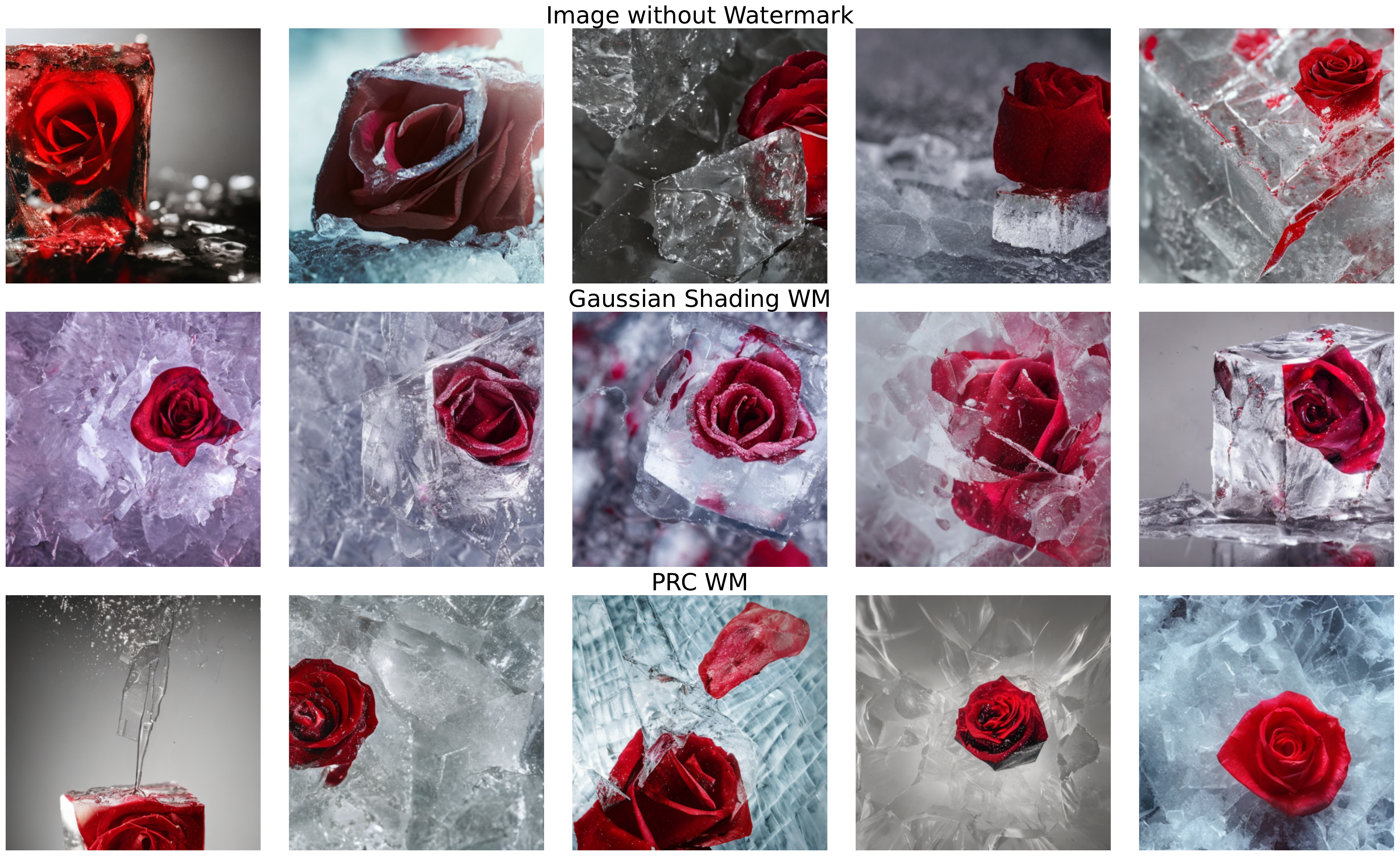}
    \caption{Images for the prompt \textit{``close-up photo of a beautiful red rose breaking through a cube made of ice, splintered cracked ice surface, frosted colors, blood dripping from rose, melting ice, Valentine's Day vibes, cinematic, sharp focus, intricate, cinematic, dramatic light''}. Notice that the flower is always in the same place under the Gaussian Shading watermark.}
    \label{figure:diversity}
\end{figure}

\begin{figure}[htbp]
\centering
\includegraphics[width=0.6\textwidth]{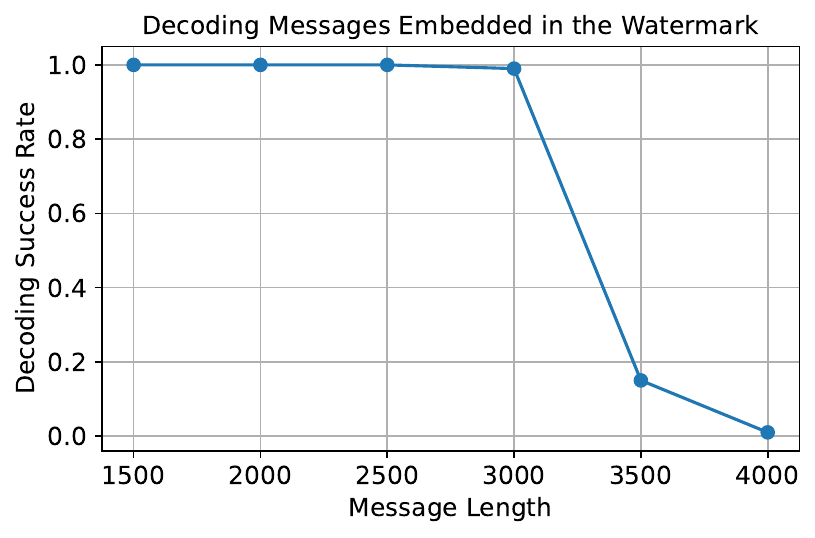}
\caption{Testing the decoder for large message lengths with no adversarial perturbations. The PRC watermark parameters we used for this experiment are $t=4$, $\fpr=1\text{e-}9$, and $\sigma=0$.}
\label{figure:decode-long}
\centering
\end{figure}

\subsection{Details on robustness} \label{section:robustness-details}
We applied a range of attacks, categorized into photometric distortions, degradation distortions, regeneration attacks, adversarial attacks, and spoofing attacks. Each type is described in detail below.

\paragraph{Photometric distortions.}
We applied two photometric distortion attacks: brightness and contrast adjustments. For brightness, we tested enhancement factors of \([2, 4, 6, 8, 12]\), where a factor of 0.0 results in a completely black image, and 1.0 retains the original image. Similarly, for contrast, we used enhancement factors of \([2, 3, 4, 5, 6, 7]\), where a factor of 0.0 produces a solid gray image, and 1.0 preserves the original image.

\paragraph{Degradation distortions.}
Three types of degradation distortions were applied: Gaussian blur, Gaussian noise, and JPEG compression. Specifically:
\begin{itemize}[leftmargin=*, itemsep=0pt]
    \item \textbf{Gaussian Blur:} We varied the radius from \([2, 4, 6, 8, 10, 12]\).
    \item \textbf{Gaussian Noise:} Noise was introduced with a mean of 0 and standard deviations of \([5, 10, 15, 20, 25, 30]\).
    \item \textbf{JPEG Compression:} Compression quality was set at \([10, 20, 30, 40, 50, 60]\), with lower quality levels leading to higher degradation.
\end{itemize}

\paragraph{Regeneration attacks.}

Regeneration attacks \citep{zhao2023invisible} alter an image's latent representation by first introducing noise and then applying a denoising process. We implemented two forms of regeneration attacks: diffusion model-based and VAE-based approaches.

\begin{itemize}[leftmargin=*, itemsep=0pt]
    \item \textbf{Diffusion Model Regeneration:} We employed the Stable-Diffusion-2-1base model as the backbone and conducted \([10, 20, 30, 50, 80, 100, 150, 200]\) diffusion steps to attack the image. As the number of steps increased, the image diverged further from the original, often causing a performance drop. Interestingly, for the FID metric, we observed that more diffusion steps sometimes improved the FID score, as the diffusion model's inherent purification process preserved a natural appearance while altering textures and styles.
    \item \textbf{VAE-Based Regeneration:} We used two pre-trained image compression models from the CompressAI library\footnote{\url{https://github.com/InterDigitalInc/CompressAI}}: Bmshj2018 \cite{balle2018variational} and Cheng2020 \cite{cheng2020learned}, referred to as Regen-VAE-B and Regen-VAE-C, respectively. Compression factors were set to \([1, 2, 3, 4, 5, 6]\), where lower compression factors resulted in more heavily degraded images.
\end{itemize}

\paragraph{Adversarial attacks.}

We also explored adversarial attacks, focusing on surrogate detector-based and embedding-based adversarial methods.

\begin{itemize}[leftmargin=*, itemsep=0pt]
    \item \textbf{Surrogate Detector Attacks:} Following \citep{saberi2023robustness}, we trained a ResNet18 model \citep{he2016deep} on watermarked and non-watermarked images to act as a surrogate classifier. Specifically, we train the model for 10 epochs with a batch size of 128 and a learning rate of 1e-4. Using this model, we applied Projected Gradient Descent (PGD) adversarial attacks \citep{MMS+18} on test images, simulating an adversary who knows either un-watermarked images and watermarked images (Adversarial-Cls), or watermarked images with two different keys (Adversarial-Cls-Diff-Key). The goal was to perturb the images with one key such that the detector misclassifies them as being associated with the other key. The attack was tested on four watermarking methods: Tree-Ring, Gaussian Shading, PRC, and StegaStamp watermark, with epsilon values of \([4, 8, 12]\). Since the PRC watermark is undetectable, we find in \Cref{figure:surrogate-training} that the surrogate classifier cannot even be trained!
    \item \textbf{Embedding-Based Adversarial Attacks:} Adversarial perturbations were also applied to the image embedding space. Given an encoder \( f : \mathcal{X} \rightarrow \mathcal{Z} \) that maps images to latent features, we crafted adversarial images \( x_{\text{adv}} \) to diverge from the original watermarked image \( x \), constrained within an \( l_{\infty} \) perturbation limit. This was solved using the PGD algorithm \citep{MMS+18}. The VAE model for the original diffusion model \texttt{stabilityai/sd-vae-ft-mse} was assumed to be known for this attack.
\end{itemize}

\section{The pseudorandom code} \label{section:prc}
We use the construction of a PRC from \cite{CG24}, which is secure under the certain-subexponential hardness of LPN.
The proof of pseudorandomness, assuming the $2^{\omega(\sqrt{\lambda})}$ hardness of LPN, from the technical overview of \cite{CG24} applies identically here.
The PRC works by essentially embedding random parity checks in codewords.
The key generation and encoding algorithms are given in \Cref{alg:prc-keygen,alg:prc-encode}.

\IncMargin{1em}
\begin{algorithm}[ht]
    \caption{$\PRC.\KeyGen$} \label{alg:prc-keygen}
    \Indentp{-1em}
    \KwIn{$n$, $\messagelength$, $\fpr$, $t$}
    \KwOut{PRC key $\key$}
    \Indentp{1em}
    \tcc{Set parameters}
    $\lambda \gets \lfloor \log_2 \binom{n}{t} \rfloor$\;
    $\noiserate \gets 1 - 2^{-1/\lambda}$\;
    $\numtestbits \gets \lceil \log_2(1/\fpr)\rceil$\;
    $k \gets \messagelength + \lambda + \numtestbits$\;
    $r \gets n-k-\lambda$\;
    $\maxbpiter \gets \lfloor \log_t n\rfloor$\;

    \tcc{Sample randomness to ensure a low false-positive rate}
    Sample uniformly random vectors $\otp \in \F_2^n$ and $\testbits \in (\F_2)^{\numtestbits}$\;
    
    \tcc{Sample generator matrix and parity-check matrix}
    Sample a uniformly random matrix $\mG_0 \in \F_2^{(n-r) \times \lambda}$\;
    \For{$i \in \{1, \dots, r\}$}{
        Sample a random $(t-1)$-sparse vector $\vw_i \in \F_2^{n-r+i-1}$\;
        $\mG_i \gets \begin{bmatrix} \mG_{i-1} \\ \vw_i^T \mG_0 \end{bmatrix}$\;
        $\vw_i' \gets [\vw_i^T, 1, 0^{r-i}]$\;
    }
    Let $\mP$ be the matrix whose rows are $\vw_1', \dots, \vw_{r}'$ and let $\mG \gets \mG_r$\;
    Sample a random permutation $\mPi \in \F_2^{n \times n}$ and let $P \gets \mP \mPi^{-1}$, $G \gets \mPi \mG$\;
    $\key \gets (n, \messagelength, \fpr, t, \lambda, \noiserate, \numtestbits, k, r, \maxbpiter, \otp, \testbits, \mG, \mP)$\;
    Output $\key$\;
\end{algorithm}
\DecMargin{1em}

\IncMargin{1em}
\begin{algorithm}[ht]
    \caption{$\PRC.\Encode$} \label{alg:prc-encode}
    \Indentp{-1em}
    \KwIn{$\key$, $\vm$}
    \KwOut{PRC codeword $\vc$}
    \Indentp{1em}
    Parse $(n, \messagelength, \fpr, t, \lambda, \noiserate, \numtestbits, k, r, \maxbpiter, \otp, \testbits, \mG, \mP) \gets \key$\;
    Sample a uniformly random vector $\vr \in \F_2^\lambda$\;
    $\vy \gets (\testbits, \vr, \vm)$\;
    Sample $\ve \sim \Ber(n, \noiserate)$\;
    $\vc \gets \mG \vy \oplus e \oplus \otp$\;
    Output $\vc$\;
\end{algorithm}
\DecMargin{1em}

Since the work of \cite{CG24}, at least two new constructions of PRCs have been introduced using different assumptions (\cite{GM24,GG24}).
It would be interesting to see if any of these new constructions yield image watermarks with improved robustness.

The main difference between the PRC used here and the one from the technical overview of \cite{CG24} is that ours is optimized for our setting by allowing soft decisions on the recovered bits.
That is, $\PRC.\Detect$ takes in not a bit-string but a vector $\vs$ of values in the interval $[-1,1]$.
If the PRC codeword is $\vc$, then $s_i$ should be the expected value $(-1)^{c_i}$ conditioned on the user's observation.
We present $\PRC.\Detect$ in \Cref{alg:prc-detect} and explain how we designed it in \Cref{section:detector-design}.

\IncMargin{1em}
\begin{algorithm}[ht]
    \caption{$\PRC.\Detect$} \label{alg:prc-detect}
    \Indentp{-1em}
    \KwIn{$\key$, $\vs$}
    \KwOut{Detection result $\True$ or $\False$}
    \Indentp{1em}
    Parse $(n, \messagelength, \fpr, t, \lambda, \noiserate, \numtestbits, k, r, \maxbpiter, \otp, \testbits, \mG, \mP) \gets \key$\;
    For $i \in [n]$, let $s_i \gets (-1)^{\otp_i} \cdot (1-2\noiserate) \cdot s_i$\;
    For each parity check $\vw \in \mP$, let $\hat{s}_\vw \gets \prod_{i \in \vw} s_i$\;
    $C \gets \frac{1}{2} \sum_{\vw \in \mP} \log^2\left(\frac{1+\hat{s}_i}{1-\hat{s}_i}\right)$\;
    \If{\[
        \sum_{\vw \in \mP} \log\left(\frac{1 + \hat{s}_\vw}{2}\right) \ge \sqrt{C \log(1/\fpr)} + \frac{1}{2} \sum_{\vw \in \mP} \log\left(\frac{1-\hat{s}_\vw^2}{4}\right)
    \]}{
        Output $\True$\;
    }\Else{
        Output $\False$\;
    }
\end{algorithm}
\DecMargin{1em}

\cite{CG24} show that any zero-bit PRC (i.e., a PRC with a $\Detect$ algorithm but no $\Decode$) can be generically converted to one that encodes information at a linear rate.
However, that construction requires increasing the block-length of the PRC, which could harm the practical performance of our watermark.
Instead, we use belief propagation with ordered statistics decoding to directly decode the message.
Note that belief propagation cannot handle a constant rate of errors if the sparsity is greater than a constant; therefore, this only works when $\Recover$ produces an accurate approximation to the initial latent.
Still, since our robustness experiments use a small sparsity of $t=3$, we find that our decoder functions even when the image is subjected to significant perturbation.

\IncMargin{1em}
\begin{algorithm}[ht]
    \caption{$\PRC.\Decode$} \label{alg:prc-decode}
    \Indentp{-1em}
    \KwIn{$\key$, $\vs$}
    \KwOut{Decoded message $\vm \in \{0,1\}^k$ or $\None$}
    \Indentp{1em}
    Parse $(n, \messagelength, \fpr, t, \lambda, \noiserate, \numtestbits, k, r, \maxbpiter, \otp, \testbits, \mG, \mP) \gets \key$\;
    For $i \in [n]$, let $s_i \gets (-1)^{\otp_i} \cdot (1-2\noiserate) \cdot s_i$\;
    $\vy \gets \BPOSD(\mG, \mP, \vs)$\;
    Parse $(\testbits', \vr, \vm) \gets \vy$\;
    \If{$\testbits' = \testbits$}{
        Output $\vm$\;
    }\Else{
        Output $\None$\;
    }
\end{algorithm}
\DecMargin{1em}

The only parameters that need to be set in $\PRC.\KeyGen$ are:
\begin{itemize}[leftmargin=*, itemsep=0pt]
    \item $n$, the block length, which is the dimension of the image latents in the PRC watermark. Holding the other parameters constant, larger $n$ will yield a more robust PRC.
    \item $\messagelength$, the length of messages that can be encoded by $\PRC.\Encode$. Increasing $\messagelength$ yields a less robust PRC.
    \item $\fpr$, the desired false positive rate. We prove in \Cref{theorem:detector-fpr} that the scheme will always have a false positive rate of at most $\fpr$, as long as the string being tested does not depend on the PRC key.
    \item $t$, the sparsity of parity checks. Larger $t$ yields undetectability against more-powerful adversaries, but decreased robustness.
\end{itemize}

For watermark detection and decoding, we allow the user to set an estimated error $\sigma$.
This should be the standard deviation of the error $\vz'-\vz$ that the user expects.
In cases where the watermark does not need to be robust to perturbations of the image, one can set $\sigma = 0$.
If $\sigma$ is not set by the user, we use a default of $\sigma = \sqrt{3/2}$ which we found to be effective for robust watermarking.

We use the Galois package of \cite{Hos20} for conveniently handling linear algebra over $\F_2$.
We use the belief propagation implementation of \cite{Rof22} to decode messages in the watermark.

\section{Details on the PRC watermark} \label{section:watermark-details}
Watermark key generation, \Cref{alg:wat-keygen}, is exactly the same as PRC key generation.

\IncMargin{1em}
\begin{algorithm}[ht]
    \caption{$\PRCWat.\KeyGen$} \label{alg:wat-keygen}
    \Indentp{-1em}
    \KwIn{$n$, $\messagelength$, $\fpr$, $t$}
    \KwOut{Watermarking key $\key$}
    \Indentp{1em}
    $\key \from \PRC.\KeyGen(n, \messagelength, \fpr, t)$\;
    $(n, \messagelength, \fpr, t, \lambda, \noiserate, \numtestbits, k, r, \maxbpiter, \otp, \testbits, \mG, \mP) \gets \key$\;
    Output $\key$\;
\end{algorithm}
\DecMargin{1em}

Watermarked image generation works by sampling the initial latents to have signs chosen according to a PRC codeword.
If a message is to be encoded in the watermark, the message is simply encoded into the PRC.

\IncMargin{1em}
\begin{algorithm}[ht]
    \caption{$\PRCWat.\Sample$} \label{alg:wat-sample}
    \Indentp{-1em}
    \KwIn{Watermarking key $\key$ and message $\vm$}
    \KwOut{Generated image $\vx$}
    \Indentp{1em}
    $\vc \gets \PRC.\Encode(\key, \vm)$\;
    Sample $\tilde{\vz} \sim \calN(\vzero, \mI_n)$\;
    \For{$i \in [n]$}{
        $\tilde{z}_i \gets c_i \cdot |\tilde{z}_i|$\;
    }
    $\vx \gets \Generate(\vpi, \tilde{\vz})$\;
    Output $\vx$\;
\end{algorithm}
\DecMargin{1em}

Our detection algorithm $\PRC.\Detect$ is given in \Cref{alg:prc-detect}.
In \Cref{section:detector-design} we will explain how we designed the detector, and in \Cref{section:detector-fpr} we will prove \Cref{theorem:detector-fpr} which says that $\PRC.\Detect$ and $\PRC.\Decode$ have false positive rates of at most $\fpr$.
Note that $\PRC.\Decode$ is guaranteed to have a false positive rate of at most $\fpr$ simply because of $\testbits$.

\IncMargin{1em}
\begin{algorithm}[ht]
    \caption{$\PRCWat.\Detect$} \label{alg:wat-detect}
    \Indentp{-1em}
    \KwIn{Watermarking key $\key$, image $\vx$, and estimated error $\sigma$}
    \KwOut{Detection result $\True$ or $\False$}
    \Indentp{1em}
    $\vz \gets \Recover(\vx)$\;
    \For{$i \in [n]$}{
        $s_i = \erf\left(\frac{z_i}{\sqrt{2 \sigma^2 (1 + \sigma^2)}}\right)$\;
    }
    $\result \gets \PRC.\Detect(\key, \vs)$\;
    Output $\result$\;
\end{algorithm}
\DecMargin{1em}

\IncMargin{1em}
\begin{algorithm}[ht]
    \caption{$\PRCWat.\Decode$} \label{alg:wat-decode}
    \Indentp{-1em}
    \KwIn{Watermarking key $\key$, image $\vx$, and estimated error $\sigma$}
    \KwOut{Decoded message $\vm \in \{0,1\}^k$ or $\None$}
    \Indentp{1em}
    $\vz \gets \Recover(\vx)$\;
    \For{$i \in [n]$}{
        $s_i = \erf\left(\frac{z_i}{\sqrt{2 \sigma^2 (1 + \sigma^2)}}\right)$\;
    }
    $\result \gets \PRC.\Decode(\key, \vs)$\;
    Output $\result$\;
\end{algorithm}
\DecMargin{1em}

\subsection{Designing the watermark detector}
\label{section:detector-design}
Let $\vz$ be the initial latent and $\vz'$ be the recovered latent.
We will compute the probability that a given parity check $w$ is satisfied by $\sign(\vz)$ (after accounting for the noise and one-time pad), conditioned on the observation of $\vz'$.
In order for this to be possible, we need to model the distributions of $\vz$ and $\vz'$: We use $\vz \sim \calN(\vzero,\mI_n)$ and $\vz' \sim \calN(\vz, \sigma^2 \mI_n)$ for some $\sigma > 0$.

Crucially, when we bound the false positive rate in \Cref{section:detector-fpr}, we will do it in a way that \emph{does not depend} on the distribution of $\vz'$; we only use the facts that $\vz \sim \calN(\vzero,\mI_n)$ and $\vz' \sim \calN(\vz, \sigma^2 \mI_n)$ to inform the design of our detector.
In other words, \Cref{theorem:detector-fpr} holds \emph{unconditionally}, even though our detector is designed to have the highest true positive rate for a particular distribution of $\vz'$.

Our first step is to compute the posterior distribution on $\sign(\vz)$, conditioned on the observation $\vz'$.

\begin{fact} \label{fact:zi-posterior}
    If $z \sim \calN(0,1)$ and $z' \sim \calN(z, \sigma^2)$ then
    \[
        \E[\sign(z) \mid z'] = \erf\left(\frac{z'}{\sqrt{2 \sigma^2 (1+\sigma^2)}}\right).
    \]
\end{fact}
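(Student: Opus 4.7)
The plan is to compute the posterior distribution of $z$ given $z'$ using Gaussian--Gaussian conjugacy, and then evaluate $\E[\sign(z) \mid z']$ directly from that posterior.

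First I would apply Bayes' rule to the joint density of $(z, z')$. With prior $z \sim \calN(0,1)$ and likelihood $z' \mid z \sim \calN(z, \sigma^2)$, the posterior $z \mid z'$ is a standard conjugate computation: completing the square in $-\tfrac{1}{2}z^2 - \tfrac{1}{2\sigma^2}(z'-z)^2$ as a quadratic in $z$ gives
\[
z \mid z' \sim \calN\!\left(\frac{z'}{1+\sigma^2},\, \frac{\sigma^2}{1+\sigma^2}\right).
\]
Call this distribution $\calN(\mu_*, \sigma_*^2)$.

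Next I would evaluate $\E[\sign(z) \mid z']$ directly from this posterior. Since $\sign(z) \in \{-1,+1\}$, we have $\E[\sign(z) \mid z'] = 2\Pr[z > 0 \mid z'] - 1$. Writing $z = \mu_* + \sigma_* g$ for $g \sim \calN(0,1)$, the event $\{z > 0\}$ becomes $\{g > -\mu_*/\sigma_*\}$, so $\Pr[z > 0 \mid z'] = \Phi(\mu_*/\sigma_*)$. Using the identity $2\Phi(x) - 1 = \erf(x/\sqrt{2})$, this gives
\[
\E[\sign(z) \mid z'] = \erf\!\left(\frac{\mu_*}{\sigma_*\sqrt{2}}\right).
\]

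Finally I would simplify the argument. Substituting $\mu_* = z'/(1+\sigma^2)$ and $\sigma_* = \sigma/\sqrt{1+\sigma^2}$ yields
\[
\frac{\mu_*}{\sigma_*\sqrt{2}} = \frac{z'/(1+\sigma^2)}{\sqrt{2\sigma^2/(1+\sigma^2)}} = \frac{z'}{\sqrt{2\sigma^2(1+\sigma^2)}},
\]
which is exactly the claimed form. There is no real obstacle here; the proof is just Gaussian conjugacy plus a one-line algebraic simplification. The only thing to be careful about is tracking the two variances (prior variance $1$ and noise variance $\sigma^2$) correctly so that the factor $1+\sigma^2$ ends up inside the square root as written.
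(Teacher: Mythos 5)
Your proof is correct and follows essentially the same route as the paper: both derive the posterior $z \mid z' \sim \calN\!\left(\frac{z'}{1+\sigma^2}, \frac{\sigma^2}{1+\sigma^2}\right)$ (the paper via the conditional formula for the joint bivariate normal, you via Bayes' rule and completing the square), then compute $\E[\sign(z)\mid z'] = 2\Phi(\mu_*/\sigma_*) - 1$ and convert to $\erf$. The difference in how the posterior is obtained is purely cosmetic.
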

\begin{proof}
    The joint distribution of $(z, z')$ is
    \[
        \begin{pmatrix} z \\ z' \end{pmatrix} \sim \calN\left( \begin{pmatrix} 0 \\ 0 \end{pmatrix}, \begin{pmatrix} 1 & 1 \\ 1 & 1+\sigma^2 \end{pmatrix} \right).
    \]
    Using the formula for the conditional multivariate normal distribution,\footnote{See, for instance, \cite[Theorem 3]{HN23}.} the distribution of $z$ conditioned on $z'$ is
    \[
        z \sim N\left( \frac{z'}{1 + \sigma^2}, \frac{\sigma^2}{1 + \sigma^2} \right).
    \]
    Therefore
    \[
        \Pr[z \ge 0 \mid z'] = \Phi\left(\frac{z'}{\sigma \sqrt{1 + \sigma^2}}\right),
    \]
    where $\Phi$ is the cumulative distribution function of the standard normal distribution, so
    \begin{align*}
        \E[\sign(z) \mid z'] &= 2 \Pr[z \ge 0 \mid z'] - 1 \\
        &= 2 \Phi\left(\frac{z'}{\sigma \sqrt{1 + \sigma^2}}\right) - 1 \\
        &= \erf\left(\frac{z'}{\sqrt{2 \sigma^2 (1 + \sigma^2)}}\right),
    \end{align*}
    where we have used the fact that $\Phi(x) = (1 + \erf(x/\sqrt{2}))/2$.
\end{proof}

Recall from \Cref{alg:prc-encode} that, in the PRC case, we generate the $i$th bit of the PRC codeword $\sign(z_i)$ by XORing the $i$th bit of a vector satisfying the parity checks with a random $e_i \sim \Ber(\noiserate)$ variable and the $i$th bit of the one-time pad $\otp_i$. Therefore we have
\[
    \E[(-1)^{\otp_i \oplus e_i} \cdot \sign(z_i) \mid z_i'] = (-1)^{\otp_i} \cdot (1-2\noiserate) \cdot \erf\left(\frac{z_i'}{\sqrt{2 \sigma^2 (1 + \sigma^2)}}\right).
\]
Let
\[
    s_i = \E[(-1)^{e_i} \cdot \sign(z_i) \mid z_i'] = (1-2\noiserate) \cdot \erf\left(\frac{z_i'}{\sqrt{2 \sigma^2 (1 + \sigma^2)}}\right).
\]
for each $i \in [n]$. Let $a_\vw = \prod_{j \in \vw} (-1)^{\otp_j}$ and $\hat{s}_\vw = \prod_{j \in \vw} s_j$ for each $\vw \in \mP$. Then $(1+a_\vw \hat{s}_\vw)/2$ is the probability that $(-1)^{\otp \oplus \ve} \cdot \sign(\vz)$ satisfies $\vw$.

Our detector simply checks whether
\[
    \log \prod_{\vw \in \mP} \left(\frac{1+a_\vw \hat{s}_\vw}{2}\right) = \sum_{\vw \in \mP} \log\left(\frac{1+a_\vw \hat{s}_\vw}{2}\right)
\]
is greater than some threshold. We set the threshold by computing a bound on the false positive rate.

\subsection{Bounding the false positive rate: Proof of Theorem \ref{theorem:detector-fpr}}
\label{section:detector-fpr}
To compute a bound on the false positive rate of the detector, we use a ``bounded from above'' version of Hoeffding's inequality due to \cite{FGL15}:
\begin{fact}[Corollary 2.7 from \cite{FGL15}] \label{fact:upper-hoeffding}
    Let $X_1, \dots, X_r \in \R$ be independent random variables such that $\E X_i = 0$, $X_i \le b_i$, and $\E X_i^2 \ge b_i^2$. Then
    \[
        \Pr[\sum_{i=1}^r X_i \ge \tau] \le \exp\left(-\frac{\tau^2}{2 \sum_{i=1}^r \E X_i^2}\right).
    \]
\end{fact}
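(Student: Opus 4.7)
The plan is to apply the classical Chernoff method and reduce the whole theorem to a single-variable moment generating function (MGF) bound that critically uses the unusual hypothesis $\E X_i^2 \ge b_i^2$. For any $\lambda > 0$, independence and Markov's inequality applied to $e^{\lambda \sum_i X_i}$ give
\[
    \Pr\Bigl[\sum_{i=1}^r X_i \ge \tau\Bigr] \le e^{-\lambda \tau} \prod_{i=1}^r \E[e^{\lambda X_i}].
\]
Granted the one-variable bound $\E[e^{\lambda X_i}] \le \exp(\lambda^2 \E X_i^2 / 2)$ for every $\lambda > 0$, the theorem follows by taking logarithms and optimizing in $\lambda$ at $\lambda^{\star} = \tau / \sum_i \E X_i^2$.

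To establish the one-variable MGF bound, I would reduce to an extremal two-atom distribution via a Chebyshev-type moment problem. Fix $i$, write $v = \E X_i^2$, $b = b_i$, so $v \ge b^2$. On the convex set of probability measures on $(-\infty, b]$ with first moment $0$ and second moment equal to $v$, the map $\nu \mapsto \int e^{\lambda x}\,d\nu(x)$ is linear, so its supremum is attained at an extreme point; standard results in the truncated moment problem identify these as distributions supported on at most two atoms, and a short monotonicity argument shows the worst case places its upper atom at the boundary $b$. This pins the extremal distribution down to one supported on $\{-c, b\}$ with $c = v/b \ge b$ and masses $b/(b+c)$, $c/(b+c)$, reducing the MGF bound to the scalar inequality
\[
    \tfrac{c}{b+c}\, e^{\lambda b} + \tfrac{b}{b+c}\, e^{-\lambda c} \le \exp\!\bigl(\tfrac{1}{2} \lambda^2 bc\bigr)
\]
for all $\lambda > 0$ and $c \ge b > 0$. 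Both sides match at $\lambda = 0$ through second order, the third derivative of the residual at $0$ evaluates to $bc(c-b) \ge 0$, and for large $\lambda$ the Gaussian term on the right dominates the two bounded-exponential terms on the left.

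The main obstacle is closing out the scalar inequality uniformly on the full parameter range. Naive approaches via a one-sided quadratic majorant $e^{\lambda y} \le 1 + \lambda y + \gamma(\lambda, b)\, y^2$ on $(-\infty, b]$ fail: the tightest choice $\gamma = (e^{\lambda b} - 1 - \lambda b)/b^2$ combined with the hypothesis $\E X_i^2 \ge b_i^2$ only yields Bennett's bound, which is strictly weaker than Gaussian when $\lambda b$ is moderate (note that $e^{\lambda b} \le e^{\lambda^2 b^2 / 2}$ holds only for $\lambda b \ge 2$). This explains why the hypothesis $\E X_i^2 \ge b_i^2$ must be used more delicately: it forces the extremal distribution to place its negative atom strictly outside $[-b, b]$, producing the asymmetry needed to recover the true Gaussian rate, and also explains why the scalar inequality above, though elementary, is not a one-line computation. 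Ruling out a dip in the residual on an intermediate $\lambda$ interval is then a routine but slightly tedious sign analysis of the third derivative.
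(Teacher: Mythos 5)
The paper does not actually prove this statement: it is imported as a black box, cited as Corollary 2.7 of \cite{FGL15}, so there is no internal proof to compare against. Judged on its own, your argument is essentially sound. The Chernoff reduction to the single-variable bound $\E[e^{\lambda X_i}] \le \exp(\lambda^2 \E X_i^2/2)$ is the right move, and the optimization at $\lambda^{\star} = \tau/\sum_i \E X_i^2$ gives exactly the stated exponent. The extremal two-point law on $\{-c, b\}$ with $c = \E X_i^2/b_i \ge b_i$ is indeed the worst case, though be aware that with three linear constraints (normalization plus two moments) the generic extreme-point bound allows \emph{three} atoms; you genuinely need the quadratic-majorant/duality step you allude to --- using that $(p(x) - e^{\lambda x})''$ changes sign exactly once for a quadratic $p$, so the contact set is one tangency point plus the endpoint $b$ --- to cut down to the specific two-point form. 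Your diagnosis of why Bennett-style majorants lose the Gaussian rate is also accurate.

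The one step you leave open, the scalar inequality $\tfrac{c}{b+c}e^{\lambda b} + \tfrac{b}{b+c}e^{-\lambda c} \le e^{\lambda^2 bc/2}$ for $c \ge b > 0$, does close, and more cleanly than ``tedious sign analysis'' suggests. Let $\phi(\lambda)$ be the logarithm of the left-hand side, i.e.\ the cumulant generating function of the two-point law. Then $\phi'''(\lambda)$ is the third cumulant of the exponentially tilted law, which for a two-point distribution with tilted mass $\tilde\alpha$ on the upper atom equals $\tilde\alpha(1-\tilde\alpha)(1-2\tilde\alpha)(b+c)^3$. The untilted mass on $b$ is $c/(b+c) \ge 1/2$ and positive tilting only increases it, so $\phi''' \le 0$ for all $\lambda \ge 0$; combined with $\phi(0)=\phi'(0)=0$ and $\phi''(0)=\mathrm{Var}=bc$, integrating three times yields $\phi(\lambda) \le \lambda^2 bc/2$. (Equivalently, one can verify $\phi'(\lambda) \le \lambda bc$ directly after the substitution $w = 1 - e^{-\lambda(b+c)}$.) So your proposal is a correct, completable proof sketch of a fact the paper only cites; the single substantive gap is the unproven scalar inequality, and it is true and provable along exactly the lines you indicate.
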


We are now ready to prove \Cref{theorem:detector-fpr}.
We state the theorem for the PRC detector and decoder; note that this immediately implies the same result for the PRC watermark detector and decoder.

\begin{reptheorem}{theorem:detector-fpr}
    Let $n, t \in \N$ and $\fpr > 0$. For any string $\vs \in [-1,1]^n$,
    \[
        \Pr_{\key \sim \PRC.\KeyGen(n, t, \fpr)}[\PRC.\Detect(\key, \vs) = \True] \le \fpr
    \]
    and
    \[
        \Pr_{\key \sim \PRC.\KeyGen(n, t, \fpr)}[\PRC.\Decode(\key, \vs) \ne \None] \le \fpr.
    \]
\end{reptheorem}
\begin{proof}
    Observe that the use of $\testbits$ immediately implies that $\PRC.\Detect$ has a false positive rate of at most $\fpr$, i.e.,
    \[
        \Pr_{\key \sim \PRC.\KeyGen(n, t, \fpr)}[\PRC.\Decode(\key, \vs) \ne \None] \le \fpr.
    \]
    We therefore turn to analyzing the false positive rate of $\PRC.\Detect$.
    We adopt the notation from \Cref{section:detector-design}, with
    \[
        a_\vw = \prod_{j \in \vw} (-1)^{\otp_j} \text{ and } \hat{s}_\vw = \prod_{j \in \vw} s_j
    \]
    for each $\vw \in \mP$.
    
    By construction of the parity check matrix, the parity checks $\vw \in \mP$ are linearly independent.
    Since $\otp$ is uniformly random, it follows that the values $a_\vw$ are independent and uniformly random from $\{-1,1\}$.
    Therefore by \Cref{fact:upper-hoeffding} it suffices to show that
    \[
        \Pr_{\va \sim \{-1,1\}^r}\left[\sum_{\vw \in \mP} \log\left(\frac{1 + a_\vw \hat{s}_\vw}{2}\right) \ge \sqrt{C \log(1/\fpr)} + \frac{1}{2} \sum_{\vw \in \mP} \log\left(\frac{1-\hat{s}_\vw^2}{4}\right)\right] \le \fpr
    \]
    where $r$ is the number of parity checks in $\mP$ and $C = \frac{1}{2} \sum_{\vw \in \mP} \log^2 \left(\frac{1 + \hat{s}_\vw}{1-\hat{s}_\vw}\right)$.

    Let
    \[
        f_\vw(a_\vw) := \frac{1 + a_\vw \hat{s}_\vw}{2}.
    \]
    for each $\vw \in \mP$.
    Since $\va$ is random, each $f_\vw(a_\vw)$ is uniformly random from $(1 \pm \hat{s}_\vw)/2$.

    Let
    \begin{align*}
        X_\vw &= \log f_\vw(a_\vw) - \frac{\log f_\vw(1) + \log f_\vw(-1)}{2} \\
    \intertext{and}
        b_\vw &= \max_{y \in \{-1,1\}} \log f_\vw(y) - \frac{\log f_\vw(1) + \log f_\vw(-1)}{2} \\
        &= \abs{\frac{\log f_\vw(1) - \log f_\vw(-1)}{2}}.
    \end{align*}
    Then $X_\vw \le b_\vw$ and
    \begin{align*}
        \E X_\vw^2 &= \E \abs{X_\vw}^2 \\
        &= \abs{\frac{\log f_\vw(1) - \log f_\vw(-1)}{2}}^2 \\
        &= b_\vw^2.
    \end{align*}
    Applying \Cref{fact:upper-hoeffding}, we find that
    \[
        \Pr\left[\sum_{\vw \in \mP} \log f_\vw(a_\vw) \ge \tau + \sum_{\vw \in \mP} \frac{\log f_\vw(1) + \log f_\vw(-1)}{2}\right] \le \exp\left(-\tau^2 / C\right)
    \]
    where $C = 2 \sum_{\vw \in \mP} \E X_\vw^2$. By the definition of $f_\vw$,
    \[
        \frac{\log f_\vw(1) + \log f_\vw(-1)}{2} = \frac{1}{2} \log\left(\frac{1-\hat{s}_\vw^2}{4}\right)
    \]
    and
    \[
        \E X_\vw^2 = \frac{1}{4} \log^2\left(\frac{1+\hat{s}_\vw}{1-\hat{s}_\vw}\right).
    \]
    Therefore
    \[
        \Pr\left[\sum_{\vw \in \mP} \log f_\vw(a_\vw) \ge \tau + \frac{1}{2} \sum_{\vw \in \mP} \log\left(\frac{1-\hat{s}_\vw^2}{4}\right)\right] \le \exp\left(-\tau^2 / C\right)
    \]
    where $C = \frac{1}{2} \sum_{\vw \in \mP} \log^2 \left(\frac{1 + \hat{s}_\vw}{1-\hat{s}_\vw}\right)$.
    The claim follows by setting $\tau = \sqrt{C \log(1/\fpr)}$.
\end{proof}

\subsection{Practical undetectability}\label{section:practical-undetectability}
We have not yet discussed the extent to which our scheme is undetectable for practical image sizes.
As observed by \cite{CGZ24}, the undetectability of any watermarking scheme can be broken with enough samples and computational resources: Undetectability just means that the resources required to detect the watermark \emph{without} the key scale super-polynomially with the resources required to detect the watermark \emph{with} the key.
And under the same assumptions as in \cite{CG24}, our scheme is asymptotically undetectable for the right scaling of parameters.
We refer to our scheme as ``undetectable'' because of this, and because our experiments on quality and detectability demonstrate that it is undetectable enough for the main practical applications.
However, for the specific, concrete choices of parameters used in our experiments, undetectability is not guaranteed against motivated adversaries.

For the PRC watermark, there exists a brute-force attack on undetectability that runs in time $O(n^{t-1})$, counting queries to the generative model as $O(1)$, where $n$ is the dimension of the image latents and $t$ is the sparsity of parity checks which can be set by the user (larger $t$ decreases the robustness).
This attack works by simply iterating over $t$-sparse parity checks until one used by the watermark is found.
We did not attempt to optimize the attack, so it is possible that faster attacks could be found.

In our experiments we have $n = 2^{14}$ dimensional image latents, and we set $t=3$ for most of our experiments demonstrating robustness.
To ensure cryptographic undetectability, a better choice would be $t = \log_2(n)/2 = 7$.
The watermark detector still works with $t=7$ for non-perturbed images, but we choose $t=3$ for most experiments because of the improved robustness.
Note that $O(n^2)$ is far greater than the $O(1)$ time required to detect prior watermarks without the key, but \emph{a motivated adversary can still break the undetectability of our scheme}.
We therefore stress that our scheme, in its current form, should not be used for undetectability-critical applications such as steganography.

The reason there exists a relatively fast brute-force distinguishing attack against our scheme is that there exist quasi-polynomial time attacks against the PRC of \cite{CG24}.
The alternative constructions of PRCs due to \cite{GM24} and \cite{GG24} also suffer from quasi-polynomial time attacks.
It is an interesting open question to construct PRCs that do not have quasi-polynomial time attacks; using our transformation, any such PRC would generically yield a watermarking scheme with improved undetectability.
We hope that generative image model watermarks with improved undetectability can be built in the future.



\begin{figure}[t]
    \centering
    \begin{subfigure}[b]{0.75\linewidth}
        \centering
        \includegraphics[width=\linewidth]{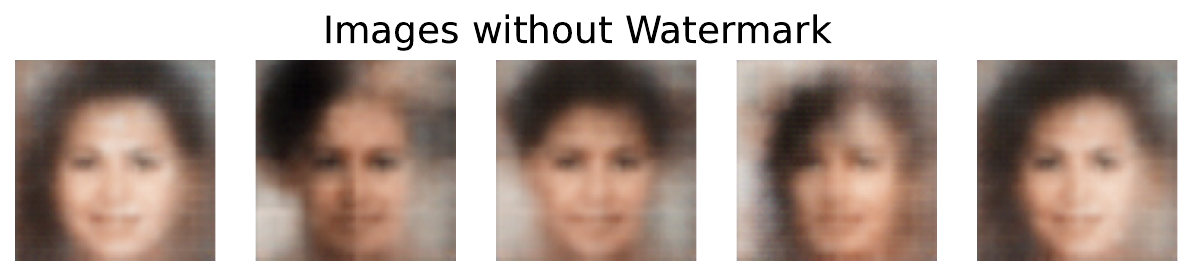}
    \end{subfigure}
    
    \begin{subfigure}[b]{0.75\linewidth}
        \centering
        \includegraphics[width=\linewidth]{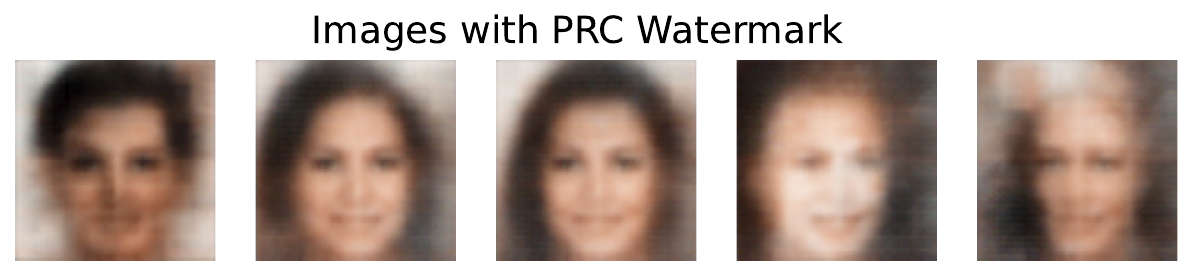}
    \end{subfigure}
    
    \caption{Comparison of unwatermark and PRC watermark images on VAE.}
    \label{fig:vae_comparison}
\end{figure}
\section{Demo: PRC watermark for VAEs} \label{section:vae}
The PRC watermark can be applied to VAEs \citep{kingma2013vae} as well. Using the same gradient descent technique as \cite{HLJBC23}, we optimize the latent to obtain the decoder inversion result for watermark detection.  We test the PRC watermark on a VAE with a 256-dimensional latent space, trained on the CelebA dataset \citep{liu2018large}.  By setting $t=2$ and FPR as 0.05, we achieve over 90\% TPR when embedding a zero-bit PRC watermark in the images. 
We show example generated images in \Cref{fig:vae_comparison}.
We did not investigate the robustness or quality of the PRC watermark for VAEs in-depth, so this section is only to demonstrate the generality of our technique.

\end{document}